\numberwithin{equation}{section}
\theoremstyle{definition}
\newtheorem{proposition}[equation]{Proposition}
\newtheorem{remark}[equation]{Remark}
\newtheorem{definition}[equation]{Definition}
\newtheorem{example}[equation]{Example}
\DeclareRobustCommand\widecheck[1]{{\mathpalette\@widecheck{#1}}}
\def\@widecheck#1#2{%
    \setbox\z@\hbox{\m@th$#1#2$}%
    \setbox\tw@\hbox{\m@th$#1%
       \widehat{%
          \vrule\@width\z@\@height\ht\z@
          \vrule\@height\z@\@width\wd\z@}$}%
    \dp\tw@-\ht\z@
    \@tempdima\ht\z@ \advance\@tempdima2\ht\tw@ \divide\@tempdima\thr@@
    \setbox\tw@\hbox{%
       \raise\@tempdima\hbox{\scalebox{1}[-1]{\lower\@tempdima\box
\tw@}}}%
    {\ooalign{\box\tw@ \cr \box\z@}}}
\newcommand{\sfLambda}{\mathsf{\Lambda}}
\newcommand{\sfGamma}{\mathsf{\Gamma}}
\newcommand{\sfZ}{\mathsf{Z}}
\newcommand{\sfH}{\mathsf{H}}
\newcommand{\ee}{\mathrm{e}}
\newcommand{\dd}{\mathrm{d}} 
\newcommand{\sfs}{\mathsf{s}} 
\newcommand{\sft}{\mathsf{t}} 
\newcommand{\sfi}{\mathsf{i}} 
\newcommand{\sfm}{\mathsf{m}} 
\newcommand{\sfp}{\mathsf{p}} 
\title{{\bf Symplectic Groupoids \\ and Poisson Electrodynamics}}
\date{}
\begin{document}

\maketitle

\begin{center}
\vskip -0.05\textheight
\renewcommand{\thefootnote}{\fnsymbol{footnote}}
Vladislav G. \textsc{Kupriyanov}${}^{1,2,}$\footnote{{\tt vladislav.kupriyanov@gmail.com}},  Alexey A. \textsc{Sharapov}${}^{1,2,}$\footnote{{\tt sharapov@phys.tsu.ru}} \& Richard J. \textsc{Szabo}${}^{3,}$\footnote{{\tt R.J.Szabo@hw.ac.uk}}
\renewcommand{\thefootnote}{\arabic{footnote}}
\setcounter{footnote}{0}

\end{center}

\begin{center}
${}^1$
\emph{Centro de Matem\'atica, Computa\c{c}\~ao e Cogni\c{c}\~o} \\
\emph{Universidade
Federal do ABC, Santo Andr\'e, SP, Brazil}\\[3mm]
${}^2$
\emph{Physics Faculty, Tomsk State University}\\
\emph{Lenin ave. 36, Tomsk
634050, Russia}\\[3mm]
${}^3$
\emph{Department of Mathematics, Heriot--Watt University}\\ \emph{Colin Maclaurin Building,
            Riccarton, Edinburgh EH14 4AS, U.K.}\\ and \emph{Maxwell Institute for
            Mathematical Sciences, Edinburgh, U.K.}
\end{center}

\vspace{5mm}

\begin{abstract}
\noindent
    We develop a geometric approach to Poisson electrodynamics, that is, the semi-classical limit of noncommutative $U(1)$ gauge theory. Our framework is based on an integrating symplectic groupoid for the underlying Poisson brackets, which we interpret as the classical phase space of a point particle on noncommutative spacetime. In this picture gauge fields arise as bisections of the symplectic groupoid while gauge transformations are parameterized by Lagrangian bisections. We provide a geometric construction of a gauge invariant action functional which minimally couples a dynamical charged particle to a background electromagnetic field. Our constructions are elucidated by several explicit examples, demonstrating the appearances of curved and even compact momentum spaces, the interplay between gauge transformations and spacetime diffeomorphisms, as well as emergent gravity phenomena.
\end{abstract}

\begin{flushright}
		\small
		{\sf EMPG--23--11}
	\end{flushright}
 
{\baselineskip=12pt
\tableofcontents
}

\section{Introduction}
\label{sec:intro}

The idea of curved momentum space dates back to the early days of quantum field theory \cite{Born}. It stems from the desire to effectively bound the momenta $p$ of particles from above and thereby overcome problems with ultraviolet divergences.    For example, if the space of momenta happens to be compact, then the desired upper bound is determined by the volume of the momentum space. This idea has resurged from time to time in different forms \cite{born1949reciprocity,Golfand:1962kjf,Tamm,Kadyshevsky:1977mu,KOWALSKI_GLIKMAN_2013,Franchino-Vinas:2023rcc,Amelino} but has met with limited success so far. 

As noted by Born in \cite{born1949reciprocity}, the hypothesis of curved momentum space is closely related to the idea of noncommutative spacetime, where the positions $x$ of particles cannot be measured with arbitrary accuracy \cite{MBron,Snyder1,Snyder2}.  The intuitive reasoning is that it is impossible to bring two repulsing particles arbitrarily close if their momenta are bounded from above. Consequently, it becomes impossible to probe spacetime at arbitrarily small distances and speak, for example, about a point where two colliding particles interact. This brings one to the idea of noncommutative spacetime with its `smeared' notion of a point. The proposal to get rid of local interactions, and hence of ultraviolet divergences, through spacetime noncommutativity was first spelt out by Snyder in  \cite{Snyder1}.

In \cite{Born, born1949reciprocity} the interplay between the curvature of momentum space and spacetime noncommutativity was formulated as a {\it reciprocity principle}. This naive principle requires all physical laws to be invariant under the canonical transformations $x\mapsto -\alpha\, p, \, p\mapsto \alpha^{-1}\, x$ for any non-zero constant $\alpha$.  Then the curvature of a spacetime manifold in general relativity implies the curvature of momentum space.\footnote{For a recent discussion of the reciprocity principle, see \cite{Buoninfante_2021}.} In this paper we revise and develop a more modern version of Born's reciprocity principle within a precise mathematical framework. In our approach, the reciprocity principle is built into  the structure of {\it symplectic  groupoids}. 

The formal definition, basic properties and some examples of symplectic groupoids are discussed in \S\ref{sec:symplgr}. For now, let us give a preliminary idea of how they work. Consider linear commutation relations for the spacetime coordinates given by
    \begin{equation}\label{xxx}
        [x^i, x^j]=f^{ij}_k\, x^k \ .    
        \end{equation}
The Jacobi identity for the commutator forces $f^{ij}_k$ to be the structure constants of some Lie algebra $\mathfrak{g}$.  The dual space $X=\mathfrak{g}^\ast$ of $\mathfrak{g}$  thus plays the role of a noncommutative spacetime. 
Let $G$ be a Lie group which integrates the Lie algebra $\mathfrak{g}$.  
    
Regarding the commutator (\ref{xxx}) as coming from the quantization of a linear Poisson bracket on $X$, one can integrate the latter to a symplectic  groupoid.  This is given by the cotangent bundle $T^\ast G$ with its canonical symplectic structure. Then the symplectic manifold  $T^\ast G\simeq X \times G$ is identified with the phase space of a point particle living in the noncommutative spacetime $X$. This identification automatically interprets $G$ as the space of particle momenta. 
If we further suppose that the Killing metric on $\mathfrak{g}$ is positive-definite, then $G$ is compact as a topological space. The compactness of $G$ 
imposes an upper bound on the uncertainty of momenta given by $\Delta p \leqslant \mathrm {Vol}(G)^{1/n}$, where $\mathrm {Vol}(G)$ is the volume of $G$ in its Haar measure and $n=\dim(G)$. The Heisenberg uncertainty relation\footnote{As we will see in \S\ref{sec:symplgr},  the positions and momenta have non-zero Poisson brackets on $T^\ast G$.} then sets a lower bound on the uncertainty of positions given by $\Delta x \geqslant \frac \hbar 2\,\mathrm {Vol}(G)^{-1/n}$. This can be regarded as a manifestation of the general reciprocity principle: the more noncommutative a spacetime is, the more curved the space of conjugate momenta becomes. 

A field theory realisation of curved momentum space was obtained in \cite{Freidel_2006} from the effective dynamics of scalar fields coupled to three-dimensional quantum gravity. After integrating out the metric field in the path integral,  one is left with an effective field theory on a noncommutative spacetime with the commutation relations (\ref{xxx}) of the Lie algebra  $\mathfrak{su}(2)$. The Feynman diagram expansion for the effective quantum field theory shows that the momenta of virtual particles take their values in the Lie group $SU(2)$ rather than in a three-dimensional vector space. 
In~\cite{KOWALSKI_GLIKMAN_2013} the same conclusion about momentum space is arrived at by considering the coupling of a classical point particle to three-dimensional gravity, while in~\cite{Lust:2017bgx} it arises from dimensional reduction of M-waves probing four-dimensional M-theory backgrounds with non-geometric Kaluza--Klein monopoles. Thus non-linear phase spaces like $T^\ast G$ appear in some concrete dynamical models, for example through the interaction with topological modes of fields. In our treatment, a non-linear  phase space with curved momentum subspace is postulated from the very beginning.  

Gauge fields can be naturally incorporated into this picture by considering the interactions of open strings on D-branes with background $B$-fields. For a single D-brane wrapping $X=\mathbb{R}^n$ in a constant $B$-field, or equivalently with a constant Poisson structure, the  seminal treatment of Seiberg and Witten in \cite{Seiberg_1999} shows that the low-energy effective field theory on the D-brane has a consistent deformation to a noncommutative gauge theory. The $U(1)$ gauge potential $A$ has field strength
\begin{equation}\label{F}
F_{ij}=\partial_i A_j-\partial_j A_i+[A_i,A_j]_\star
\end{equation}
and transforms under infinitesimal $*$-gauge transformations as
\begin{equation}\label{SW2}
    \delta_\varepsilon A_i=\partial_i\,\varepsilon +[A_i,\varepsilon]_\star \quad , \quad \delta_\varepsilon F_{ij} = [F_{ij},\varepsilon]_\star \ ,
\end{equation}
for a $U(1)$ gauge parameter $\varepsilon,$ which close to the Lie algebra
\begin{align} \label{eq:starclose}
    [\delta_{\varepsilon_1},\delta_{\varepsilon_2}] = \delta_{[\varepsilon_1,\varepsilon_2]_\star} \ .
\end{align}
The brackets $[-,-]_\star$ denote star-commutators defined with respect to the Moyal--Weyl star-product of functions which quantizes the constant Poisson structure. See~\cite{Douglas:2001ba,Szabo:2001kg} for early reviews of the subject.

The noncommutative gauge theory admits a well-defined consistent low-energy limit for slowly varying gauge fields $A$. The low-energy limit appears as the semi-classical limit where one replaces  the star-commutators in (\ref{F})--\eqref{eq:starclose} with the corresponding Poisson brackets. The Poisson bracket brings non-linearity into the Maxwell equations for the $U(1)$ gauge potentials without introducing higher derivatives or non-localities. It also upgrades the abelian gauge transformations of $A$ to nonabelian gauge transformations. We refer to the resulting gauge theory as {\it Poisson electrodynamics}.

Non-constant background $B$-fields lead to non-constant Poisson brackets, so that the extent of spacetime noncommutativity varies from point to point. This also yields a more general Lie algebra of infinitesimal gauge symmetries. However, the  naive implementation of this generalisation encounters immediately the following problem: the partial derivatives $\partial_i$ in (\ref{F}) do not differentiate non-constant Poisson brackets by the Leibniz rule. As a result, neither the covariance of the field strength in \eqref{SW2} nor the closure of gauge variations in \eqref{eq:starclose} holds. The problem of constructing a consistent noncommutative field theory in this case is related to the mathematical problem of constructing a differential graded Poisson algebra on differential forms.

By now there are several proposals for how to incorporate a general Poisson bracket into noncommutative field theory. In this paper we are interested in the recent proposal of \cite{Kupriyanov:2018xji,Kupriyanov:2018yaj,Kupriyanov_2021} which advocates the method of {\it symplectic embedding}.\footnote{The terminology is used because the technique also covers the more general cases of twisted and almost Poisson structures. In this paper we work solely with Poisson structures, for which a symplectic embedding coincides with the more common notion of {\it symplectic realization} that is used in the mathematics literature, see \S\ref{sec:symplgr}.} As the  name suggests, one starts with extending a given Poisson manifold $X$, regarded as a physical spacetime, to some ambient symplectic manifold $G\subseteq T^\ast X$. The  latter is endowed with nondegenerate Poisson brackets of the form
\begin{equation}\label{se}
\{x^i,x^i\}=\pi^{ij}(x)\quad,\quad \{x^j, p_i\}=\gamma^j_i(x,p)\quad,\quad \{p_i, p_j\}=0\ .
\end{equation}
Here $\pi^{ij}(x)$ is the Poisson bivector on $X$, $p_i$ are linear coordinates in the fibers of the cotangent bundle $T^\ast X$, and $\gamma^i_j(x,p)=\delta^i_j+O(p)$ are given by formal power series in $p_i$ with smooth coefficients. The symplectic structure is thus defined in a formal neighbourhood of the zero section $X\subset T^\ast X$. The construction of the functions  $\gamma_i^j$ for the most part repeats the construction of a local symplectic groupoid  integrating the Poisson manifold $(X, \pi)$, see \S\ref{sec:symplgr}. 

Gauge transformations are then  postulated  in  the form
\begin{equation}\label{gtr}
\delta_\varepsilon A_i=\gamma_i^j (x,A)\,\partial_j\varepsilon +\{A_i, \varepsilon\}\ ,
\end{equation}
where $\varepsilon(x)$ is an infinitesimal gauge parameter. 
Remarkably, these transformations close to the Lie algebra 
\begin{equation}\label{commut}
[\delta_{\varepsilon_1},\delta_{\varepsilon_2}]=\delta_{\{\varepsilon_1,\varepsilon_2\}} \ .
\end{equation}
One of the aims of the present paper is to give a conceptual geometric explanation  for the formula (\ref{gtr}). 
Again the language of symplectic groupoids is most suitable for this purpose, and their appearance in this context is in hindsight not surprising:
Local symplectic groupoids appear in the semi-classical limit of quantizations of Poisson manifolds~\cite{KM}.

\begin{table}
\begin{center}
\begin{tabular}{ |p{2mm}|p{7.3cm}|p{7.3cm}|} 
 \hline
 &&\\

1 &\bf Semi-classical spacetime & Poisson manifold $(X, \pi)$\\[3mm]
 
2&\bf  Phase space of a point particle in  $X$ & Symplectic groupoid $G\rightrightarrows X$ integrating $(X, \pi)$ \\ [7mm]
 
 3 &\bf Electromagnetic potentials  &  Group $\mathscr{B}(G)$ of bisections \\ [3mm]
  
 4&\bf Gauge group &  Subgroup $\mathscr{L}(G)\subset \mathscr{B}(G)$ of Lagrangian bisections \\ [7mm]

5& \bf Gauge transformations&  Left action of $\mathscr{L}(G)$ on $\mathscr{B}(G)$ accompanied by right action on $G$: If $\Sigma\mapsto \Sigma'\cdot\Sigma$ for $\Sigma\in \mathscr{B}(G)$ and $\Sigma'\in \mathscr{L}(G)$, then $$R^{-1}_{\Sigma'}: G\longrightarrow G$$\\[-5mm]

6& \bf Minimal coupling to electromagnetic field & Described by the Hamiltonian $$H^\Sigma=R^\ast_\Sigma H$$ where $H\in C^\infty(G)$ is the Hamiltonian of a
neutral  particle and $\Sigma \in \mathscr{B}(G)$ is a gauge field\\ [ 30mm]

7& \bf  Gauge invariant electromagnetic field strength tensor&    Two-form $F^\sft=\Sigma_\sft^\ast\omega$ on $X$, where $\Sigma_\sft\in \mathscr{B}(G)$ is a section of the target map and $\omega$ is the symplectic form on $G$\\[13mm]

8& \bf Gauge covariant  electromagnetic field strength tensor&  Two-form $F^\sfs=\Sigma_\sfs^\ast\omega$ on $X$, where $\Sigma_\sfs\in \mathscr{B}(G)$ is a section of the source map\\
&& \\
 
\hline
\end{tabular}
\end{center}
\caption{\small Physical quantities in Poisson electrodynamics (in the left column) and their geometric realisations in the language of symplectic groupoids (in the right column). Items 1--2 are explained in detail together with explicit examples in \S\ref{sec:symplgr}, items 3--5 in \S\ref{sec:bisections}, and items 6--8 in \S\ref{sec:CPE}.\label{tab:glossary}}
\end{table}

In our geometric setting, gauge fields are parameterized by the group of \emph{bisections} of a symplectic groupoid, whose subgroup of Lagrangian bisections parametrize (finite) gauge transformations of Poisson electrodynamics. This generalises the situation in classical electrodynamics, where a $U(1)$ gauge field $A$ on $X$ is a section of the cotangent bundle $T^*X$ and gauge transformations act as translations  of $A$  by closed one-forms. Generally, bisections act by diffeomorphisms on the base manifold $X$, thus realising the interplay between gauge transformations and spacetime diffeomorphisms that is a general feature of noncommutative gauge theories~\cite{Gross:2000ba,Lizzi:2001nd}. We give precise definitions, properties and examples in~\S\ref{sec:bisections}.

Recent work on Poisson electrodynamics is found in~\cite{Kupriyanov:2021aet,Kurkov:2021kxa,Abla:2022wfz,Kupriyanov:2022ohu,Kupriyanov:2023gjj,Abla:2023odq}. One puzzling feature of the theory is that it is not clear how to couple it to matter fields in representations of the gauge group other than the adjoint representation: gauge symmetries and interactions in noncommutative field theories constructed solely in terms of star-products (rather than star-commmutators) do not have a well-defined semi-classical limit. In \S\ref{sec:CPE} we take first steps towards addressing this problem by constructing minimal couplings of dynamical charged particles to a background electromagnetic field in Poisson electrodynamics. We further give natural definitions of both gauge covariant and gauge invariant field strength tensors for the gauge potential, and briefly address the problem of constructing a gauge invariant action functional for the electromagnetic field. Our constructions are purely geometric and are summarised in Table~\ref{tab:glossary}. 

We proceed to present several explicit examples of our constructions in \S\ref{sec:ex}, including the case of a constant Poisson structure discussed above where we show how emergent gravity is a feature of Poisson electrodynamics. We further exemplify the considerations of curved momentum space in the $SU(2)$ example mentioned above as well as in the popular example of $\kappa$-Minkowski spacetime. Finally, we illustrate our constructions for a class of non-linear Poisson brackets.

In \S\ref{sec:concl} we conclude with a summary of our results and some prospects for future work.

\newpage

\section{Symplectic groupoids}
\label{sec:symplgr}

We begin by introducing some basic concepts and examples whose systematic exposition can be found in \cite{KM, WdS, MC, Visman, CFM}. For a gentle introduction to the subject of Lie groupoids,  see \cite{weinstein1996groupoids}. 

\paragraph{\underline{{\textsf{Groupoids}.}}}

\begin{definition}
A  \emph{groupoid} is  a small category whose morphisms are all invertible. 
\end{definition}

To unravel this definition let us denote by $X$ the set of objects and by $G$ the set of morphisms of the category. Then there are two maps  $$\xymatrix{ & G \ar[dr]^\sft \ar[dl]_\sfs & \\ X & & X }$$ called {\it source} and {\it target}. It is convenient to denote morphisms $g\in G$ by arrows 
$$x \xrightarrow{ \ g \ } y$$ with $x=\sfs(g)\in X$ and $y=\sft(g)\in X$. 

Two arrows $g_1, g_2\in G$ are called {\it composable} if $\sft(g_1)=\sfs(g_2)$. By definition,  to each pair of composable arrows $(g_1, g_2)$ there corresponds an arrow $g=g_1\,g_2 \in G$ called the composition (or product) of  $g_1$ and $g_2$:
$$
\xymatrix{x \ar[r]^{g_1}\ar@/_1pc/[rr]_{g_1\,g_2}&  y\ar[r]^{g_2} & z}
$$
The product of composable arrows is required to be associative:
$$(g_1\,g_2)\,g_3=g_1\,(g_2\,g_3)\ .$$

For each $x\in X$ there exists an identity morphism $e_x\in G$, called a \emph{unit}, such that $\sfs(e_x)=\sft(e_x)=x$, $e_x\, g=g$ and $g\,e_y=g$ for every morphism
\smash{$x\stackrel{g}{\longrightarrow} y$}.  Furthermore, to each morphism \smash{$x\stackrel{g}{\longrightarrow} y$} there corresponds an inverse morphism \smash{$y\xrightarrow{ \ g^{-1} \ }x$} such that $g^{-1}\,g=e_x$ and $g\,g^{-1}=e_y$:
$$
\xymatrix{ x\ar@/^/[rr]^{g}\ar@(ul,dl)[]_{e_x}& & y\ar@/^/[ll]^{g^{-1}}\ar@(ur,dr)[]^{e_y}
}
$$

It is convenient to identify $X$ with the subset of units  and think of $G$ as a special bifibration $\sfs,\sft: G\rightarrow X$ over the base $X$, see Figure~\ref{Gr}. This suggests the shorthand notation $G\rightrightarrows X$
for a groupoid.

\begin{figure}[ht]
\center{
\Large
\begin{tikzpicture}[scale=1.5]
%%%%%%%%%%%%%%%%%%%%%%%%%%%%%%%%%%%%%%%%%%%%%%%%%%%% a $$$$$$$$$$$$$$$$$$$$$$$$$$$$$$$$$$$$$$
\draw[] (0,0) -- (5,0);
\draw[-, color=blue] (0.25,-1.5) -- (2,2);
\draw[-, color=blue] (1.25,-1.5) -- (3,2);
\draw[-, color=blue] (2.25,-1.5) -- (4,2);
\draw[-, color=blue] (3.25,-1.5) -- (5,2);
\draw[-, color=blue] (0.75,-1.5) -- (2.5,2);
\draw[-, color=blue] (1.75,-1.5) -- (3.5,2);
\draw[-, color=blue] (2.75,-1.5) -- (4.5,2);

\draw[-, color=red ] (0,2) -- (1.75,-1.5);
\draw[- , color=red] (1,2) -- (2.75,-1.5);
\draw[-, color=red] (2,2) -- (3.75,-1.5);
\draw[-, color=red] (3,2) -- (4.75,-1.5);
\draw[-, color=red] (0.5,2) -- (2.25,-1.5);
\draw[-, color=red] (1.5,2) -- (3.25,-1.5);
\draw[-, color=red] (2.5,2) -- (4.25,-1.5);

\draw[-, thick, color=blue] (2,1) -- (2.25,1.5);
\draw[-, thick, color=blue] (2.5,0) -- (2.75, 0.5);

\draw[-, thick, color=red] (2.5,0) -- (2,1);
\draw[-, thick, color=red] (2.75,0.5) -- (2.25, 1.5);

\fill [black]  (2.5,0) circle (1 pt);
\fill [black]  (2,1) circle (1 pt);
\fill [black]  (2.75,0.5) circle (1 pt);
\fill [black]  (2.25, 1.5) circle (1 pt);
\fill [black]  (2,-1) circle (1 pt);

\coordinate [label=right:${}_{{}_{h}}$] (A) at (2.75, 0.5);
\coordinate [label=left:${}_{{}_{g}}$] (A) at (2, 1);
\coordinate [label=right:${}_{{}_{\!\!g\,h}}$] (A) at (2.25, 1.5);
\coordinate [label=right:${}_{{}_{g^{{-\!1}{}}}}$] (A) at (1.95, -1);
\coordinate [label=below:${}_{{}_{\sft(\!g\!)=\sfs(\!h\!)}}$] (A) at (2.51, 0.08);
\coordinate [label=:${}_{X}$] (A) at (0.25, 0);
\coordinate [label=left:${\mbox{\tiny{\textcolor{blue}{$\sfs$-fibres}}}}$] (A) at (5.2, 1);
\coordinate [label=right:${\mbox{\tiny{\textcolor{red}{$\sft$-fibres}}}}$] (A) at (-0.2, 1);

\coordinate [label=:${}_{X}$] (A) at (0.25, 0);

\end{tikzpicture}
\normalsize
}
\caption{\small Groupoid multiplication.}
\label{Gr}
\end{figure}
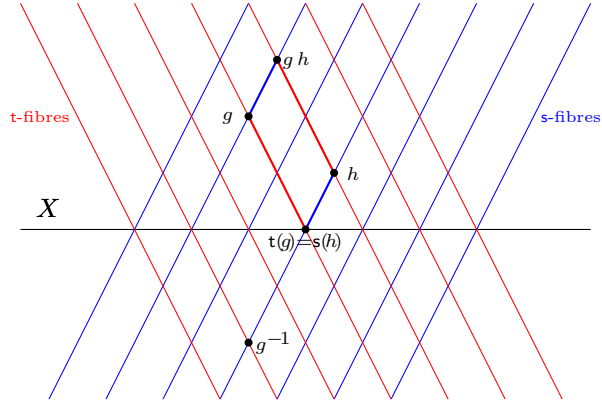

\begin{example}\label{GG}
Every group $G$ can be viewed as a groupoid over the one-point base $X=\{e\}$. Hence $\sfs(g)=\sft(g)=e\in G$ and all  morphisms $g\in G$ are composable.  The associative product is given by multiplication in the group $G$. This  example explains the origin of the name `groupoid'. 
\end{example}

Notice that the elements of the intersection $\sfs^{-1}(x)\cap \sft^{-1}(x)$, being pairwise composable, form a group $G_x$ for each $x\in X$. This group is called the {\it isotropy group} at $x\in X$; in Example~\ref{GG}, $G_e\simeq G$. The {\it orbit} of a point $x\in X$ is the set $O_x:=\sft(\sfs^{-1}(x))$ of all targets that can be reached from a given source $x$. A groupoid $G\rightrightarrows X$ is called {\it transitive}  if $O_x =X$ for some (and hence for all) $x\in X$; this means that for each pair of points $x,y\in X$ there exists at least one arrow \smash{$x\stackrel{g}{\longrightarrow}y$}. We say that a groupoid  $G\rightrightarrows X$ is {\it totally intransitive} if $O_x=\{x\}$ for all $x\in X$.

\paragraph{\underline{\textsf{Lie groupoids.}}}

In the following we consider groupoids endowed with additional structures. In particular, all of our groupoids will be {\it smooth}, meaning that $G$ and $X$ are smooth manifolds and all structure maps --- source, target, multiplication, and inversion --- are smooth.\footnote{Usually one also requires the source and target maps to be surjective submersions.} These are called {\it Lie groupoids}.

\begin{example}\label{Ex2}
Let $T G$ be the tangent bundle of a Lie group $G$. The group $G$ acts on itself by left and right translations:
$$
g\longmapsto h\,g\qquad \text{and} \qquad g\longmapsto g\, h\ ,
$$
for all $h\in G$.
Let $l(h)$ and $r(h)$ denote the corresponding maps of the tangent space $T_gG\rightarrow T_{h\,g}G$ and $T_gG\rightarrow T_{g\,h}G$. 

The elements of the tangent bundle $T G$ are pairs $(g, v)$, where $g\in G$ and $v\in T_gG$. One can make $TG$ into a Lie groupoid by identifying the submanifold of units $X\subset TG$ with the tangent space $T_e G$ at the identity $e\in G$.  The source and target maps are defined as
$$
\sfs(g,v)=\big(e, r(g^{-1})\,v\big) \qquad \text{and} \qquad \sft(g,v)=\big(e, l(g^{-1})\,v\big)\ .
$$
Hence the elements $(g,v)$ and $(h,u)$ are composable if and only if
\begin{equation}\label{comcon}
l(g^{-1})\,v=r(h^{-1})\,u \ \in \ T_eG\ .
\end{equation}
For composable pairs we set 
$$
(g,v)\,(h,u)=L_g(h,u):=\big(g\,h, l(g)\,u\big) \ \in \ T_{g\,h}G\ ,
$$
where $L_g: TG\rightarrow TG$ is  the diffeomorphism of the tangent bundle induced by the left translation on $G$.   The inversion is given by
$$
(g, v)^{-1} = \big(g^{-1}, l(g^{-1})\,r(g^{-1})\,v\big)\ .
$$
\end{example}

\paragraph{\underline{\textsf{Symplectic groupoids.}}}

Of most interest to us are Lie groupoids endowed with compatible symplectic structures. These are called `symplectic groupoids'. 
To give their formal definition, we need some more notation. 
Given a pair of symplectic manifolds  $(M_1,\omega_1)$ and $(M_2,\omega_2)$, one can endow the product manifold $M_1\times M_2$ with the symplectic structure  determined by the sum of  pullbacks
$$
\omega_1\oplus \omega_2:=\sfp_1^\ast\,\omega_1+\sfp^\ast_2\,\omega_2 \ ,
$$
where $\sfp_1$ and $\sfp_2$ are the canonical  projections onto the first and second factors in $M_1\times M_2$, respectively. Given a Lie groupoid $G\rightrightarrows X$, let $G\ast G\subseteq G\times G$ denote the submanifold of composable elements:
\begin{equation}
    G\ast G=\{ (g_1,g_2)\in G\times G\; | \; \sft(g_1)=\sfs(g_2)\}\ .
\end{equation}

\begin{definition}
    A \emph{symplectic groupoid} is a Lie groupoid $G\rightrightarrows X$ such that
    \begin{enumerate}
        \item[(a)]  $G$ is a symplectic manifold with symplectic two-form $\omega$, and
       % \item $X\subset G$ is a Lagrangian submanifold and $s$, $t$ are submersions transversal to $X$,
        \item[(b)]\label{item:b} the multiplication map 
        $$
        \sfm:G\ast G\longrightarrow G \ ,\quad (g,h)\longmapsto g\,h
        $$
        is Poisson, i.e., the pullback of the symplectic form $\omega$ by $\sfm$ coincides with the restriction of $\omega\oplus \omega$ to $G\ast G$. 
    \end{enumerate}
\end{definition}   

Generally, a $k$-form $\alpha\in \sfLambda^k(G)$ is called {\it multiplicative} if $$\sfm^\ast (\alpha) =\alpha \oplus\alpha \big|_{G\ast G}\ .$$
According to item (b), the symplectic structure on $G$ is defined by a multiplicative two-form $\omega$. It follows from the definition that
\begin{enumerate}
    \item[S1.] the inverse map ($g\mapsto g^{-1}$) is an antisymplectomorphism of $(G, \omega)$, i.e.,  it sends $\omega$ to $-\omega$;
    \item[S2.] $X$ is a Lagrangian submanifold of $(G,\omega)$;
    \item[S3.] $X$ has a unique Poisson structure $\pi$ for which $\sfs$ is a Poisson map and $\sft$ is anti-Poisson, i.e.,~$\sfs_*(\omega^{-1})=\pi$ and $\sft_*(\omega^{-1})=-\pi$; 
    \item[S4.] the symplectic leaves of $(X, \pi)$ coincide with the orbits of $G\rightrightarrows X$. 
\end{enumerate}

Property S3 allows one to refer to $(G, \omega)$ as a {\it symplectic realization} of the Poisson manifold $(X,\pi)$. The main idea is to `desingularize' a given Poisson structure $\pi$ by constructing a Poisson submersion from a suitable symplectic manifold $(G, \omega)$ onto $(X, \pi)$.  

\begin{example}\label{TG}
        We provide a symplectic realization for any linear Poisson bracket. In Example~\ref{Ex2}, one can replace $TG$ with the cotangent bundle $T^\ast G$ to produce a new Lie groupoid.\footnote{In fact, this is true for any tensor bundle over $G$.} The cotangent bundle $T^\ast G$ carries the canonical symplectic structure given by the differential of the Liouville one-form, which makes $T^\ast G$ into a symplectic groupoid. 
        
        Indeed, each fibre $T^\ast_gG$ is a Lagrangian submanifold, and in particular the submanifold of units $T^\ast_e G\subset T^\ast G$ is Lagrangian. To check property S3 it is convenient to trivialize the cotangent bundle $T^\ast G$ using the left translations:
        \begin{equation}\label{triv}
        T^\ast_g G \ \ni \ (g,v)\longrightarrow \big(g, l(g^{-1})\,v\big) \ \in \ G\times T^\ast_eG\ .
        \end{equation}
        Here we view the vector space $T^\ast_e G = \mathfrak{g}^\ast$ as the Lie coalgebra dual to the Lie algebra $\mathfrak{g}$ of the group $G$. The corresponding coproduct $\Delta: \mathfrak{g}^\ast\rightarrow \mathfrak{g}^\ast\otimes \mathfrak{g}^\ast$ then endows the space $\mathfrak{g}^\ast$ with a linear Poisson bracket. If $\{t_a\}$ is  a basis of $\mathfrak{g}$ such that $[t_a,t_b]=f_{ab}^c\,t_c$, then in the dual basis $\{t^a\}$ we get $\Delta t^a=f^a_{bc}\,t^b\wedge t^c$; here $\langle t^a,t_b\rangle=\delta^a_b$.
        
        Under the trivialization (\ref{triv}) the source and target maps become
        \begin{equation}\label{lpb}
            \sfs(g,v)= v \qquad \text{and} \qquad \sft(g,v)=\mathsf{Ad}_{g^{-1}}^\ast(v) \ ,
        \end{equation}
        for all $(g,v)\in G\times\mathfrak{g}^\ast$, where $\mathsf{Ad}^\ast$ stands for the coadjoint representation of $G$. Hence the product of composable elements is given by $(g,v)\,(h,u)=(g\,h, v)$. 
        The canonical symplectic structure on $T^\ast G$ then takes the form
        $$
        \omega=\dd\langle v, \dd g\, g^{-1}\rangle=\langle \dd v\wedge \dd g\, g^{-1}\rangle-\langle v,\dd g\, g^{-1}\wedge \dd g\, g^{-1}\rangle
        =\dd v_a \wedge e^a-v_a\, f^a_{bc}\,e^b\wedge e^c\ ,
        $$
where $v_a$ are linear coordinates on $\mathfrak{g}^\ast$ and $\dd g\, g^{-1}=e^a\,t_a$ is the right-invariant one-form on $G$ with values in  $\mathfrak{g}$. 

It is enough to check multiplicativity for the symplectic potential $$\theta=\langle v, \dd g\,g^{-1}\rangle \ .$$ One can identify the submanifold of composable elements with $G\times G\times \mathfrak{g}^\ast$ through the identifications
\begin{equation*}
    \big((g,v)\,,\, (h, \mathsf{Ad}^\ast_{g^{-1}}(v))\big)\quad\longleftrightarrow\quad (g,h,v)\ .
\end{equation*}
Under this identification 
\begin{equation*}
    \sfp_1(g,h,v)=(g,v)\quad,\quad \sfp_2(g,h,v)=\big(h, \mathsf{Ad}^\ast_{g^{-1}}(v)\big)\quad,\quad \sfm(g,h,v)=(g\,h,v)\ ,
\end{equation*}
giving
\begin{equation*}
    \sfp_1^\ast\, \theta=\langle v, \dd g\, g^{-1}\rangle \qquad \text{and} \qquad \sfp^\ast_2\, \theta=\langle \mathsf{Ad}^\ast_{g^{-1}} (v), \dd h\, h^{-1}\rangle\ .
    \end{equation*}
Hence
\begin{align*}
    \sfm^\ast \theta = \langle v, \dd(g\,h)\,(g\,h)^{-1}\rangle &= \langle v, \dd g\,g^{-1}\rangle + \langle v, g\, (\dd h\, h^{-1})\,g^{-1}\rangle \\[4pt]
    &= \langle v, \dd g\, g^{-1}\rangle + \langle \mathsf{Ad}^\ast_{g^{-1}}(v), \dd h \, h^{-1}\rangle  = \sfp_1^\ast\,\theta + \sfp_2^\ast\,\theta  \ ,     
    \end{align*}
and multiplicativity follows. 
\end{example}

\paragraph{\underline{\textsf{Local symplectic groupoids.}}}
 
A closer look at Example~\ref{TG} shows that the symplectic realization for linear Lie--Poisson structures owes its existence  to Lie's third theorem, that is, to the existence of a Lie group integrating a given Lie algebra. One can wonder about symplectic realizations for an arbitrary (not necessarily linear) Poisson structure: can any Poisson manifold $X$  be `integrated' to a symplectic groupoid having $X$ as the manifold of units? Unfortunately, the answer to this question is no, with many explicit counterexamples. 

However, every Poisson manifold can be integrated to a {\it{local symplectic groupoid}}. The latter can be roughly defined as a tubular neighbourhood of the base $X$ of a genuine symplectic groupoid (if it really existed). The proof is based on the following simple construction \cite[Ch.~II]{KM}. 

Let $(U, q^i)$ be a coordinate chart on an $n$-dimensional Poisson manifold $(X,\pi)$. Consider the system of ordinary differential equations
\begin{equation}\label{ODE}
\frac{\dd q^i}{\dd t}=\pi^{ij}(q)\, p_j\ , \quad i=1,\dots,n \ ,
\end{equation}
where $p$ is a fixed vector of $\mathbb{R}^n$. Let $q(t)=q(x,p,t)$ be the unique solution of (\ref{ODE}) subject to the initial condition $q(0)=x\in U$. For $p$ sufficiently small, we can assume that $q(t)$ exists for all $t\in [-1,1]$. Define functions
\begin{equation}\label{int}
\sigma^i(x, p)=\int^{1}_{0}\, q^i(x,p,t) \, \dd t\qquad \text{and} \qquad \tau^i(x,p)=\int_{-1}^0\, q^i(x,p,t)\, \dd t\ .
\end{equation}
Solving the equations 
\begin{equation}\label{equ}
q^i=\sigma^i(x,p)\qquad\text{and}\qquad q^i=\tau^i(x,p)
\end{equation}
for $x$ respectively gives two vector functions
\begin{equation}\label{sol}
x^i=\sfs^i(q,p) \qquad \text{and} \qquad x^i=\sft^i(q,p)\ .
\end{equation}
By construction $\sfs^i(q,0)=\sft^i(q,0)=q^i$.   

Let $\omega=\dd p_i\wedge \dd q^i$ be the canonical symplectic form on a sufficiently small  domain $W\subset \mathbb{R}^{2n}$ parametrized by $(q^i,p_j)$ on which the solutions (\ref{sol}) to the equations (\ref{equ}) make sense. Geometrically, one can view $W$  as a neighbourhood of the zero section in $T^\ast U$. Lemma 1.4 of \cite[Ch.~II]{KM} states  that
the functions (\ref{sol}) define, respectively, Poisson and anti-Poisson maps $W\to U$ that Poisson commute with each other:     
\begin{equation}\label{s-t}
     \{\sfs^i, \sfs^j\}=\pi^{ij}(\sfs) \quad,\quad    \{\sft^i, \sft^j\}=-\pi^{ij}(\sft) \quad,\quad   \{\sfs^i, \sft^j\}=0 \ ,
     \end{equation}
where $\{-,-\}$ denotes the Poisson bracket of $(X,\pi)$.

The vector functions (\ref{sol}) define the source and target maps of a local symplectic groupoid over $U$. It remains to describe a partially-defined multiplication operation on $W$. To this end,  one introduces the Hamiltonian flow $\varphi_t$ on $W$ generated by some function $\sfs^\ast h$ with $h\in C^\infty(U)$. For each point $a=(q,p)\in W$ it is always possible to  choose the Hamiltonian $h$ in such a way that $a=\varphi_1(\sft(a))$. Let $b=(q',p')$ be another point in $W$ such that $\sfs(b)=\sft(a)$. We put
\begin{equation}\label{ab}
     a\,b:=\varphi_1(b)\ .
     \end{equation}
According to Lemma 1.7 of \cite[Ch.~II]{KM}, this product satisfies all the required properties and does not depend on the choice of a Hamiltonian function $h$.  
By carefully gluing up the domains $W$ along $X$, one obtains a local symplectic groupoid integrating the Poisson structure $\pi$. 

\begin{remark}
As mentioned in~\cite{Kupriyanov_2021}, this construction enables one to make contact with the symplectic embedding method. For this, we note that the formulas (\ref{sol}) and (\ref{s-t}) imply the Poisson brackets
\begin{equation}\label{PBr}
    \{x^i,x^j\}=\pi^{ij}(x)\quad,\quad \{x^j, p_i\}=\frac{\partial\, \sfs^j}{\partial q^i}\quad,\qquad \{p_i,p_j\}=0\ .
\end{equation}
This allows one to identify the tensors $\gamma$ in (\ref{se}) as 
\begin{equation*}
    \gamma_i^j(x,p)=\left(\frac{\partial\, \sfs^j}{\partial q^i}\right)\big(\sigma(x,p)\,,\,p\big)\ .
\end{equation*}
\end{remark}

One can use the formulas (\ref{ODE})--(\ref{sol}) and (\ref{ab}) to explicitly integrate constant Poisson structures. 

\begin{example}\label{zero-PB} 
Every manifold $X$ can be viewed as a Poisson manifold   with the zero Poisson bracket. The corresponding symplectic groupoid 
integrating the zero Poisson structure is just the total space of the cotangent bundle $T^\ast X$ endowed  with the canonical symplectic structure $\omega=\dd p_i\,\wedge\, \dd x^i$. The source and target maps coincide with each other as well as with the canonical projection $\sfp: T^\ast X\rightarrow X$, while the object inclusion $X\to T^*X$ is the zero section. Then $\sfs^{-1}(x)=\sft^{-1}(x)=T_x^\ast X$ and the groupoid $T^\ast X$ is totally intransitive. As a result, the isotropy group at each point $x\in M$ is maximal and is given by $G_x=T_x^\ast X$; here  we regard the fibre $T_x^\ast M$ as an abelian group for the addition of covectors: $(p,p')\mapsto p+p'$. The product of composable elements reads
$$
(x,p)\,(x,p')=(x, p+p')\ .
$$
The simple identity 
$$
\dd x^i\wedge \dd(p_i+p'_i)=\dd x^i\wedge \dd p_i +\dd x^i\wedge \dd p'_i
$$
shows that the canonical  symplectic structure $\omega$ is multiplicative. 
\end{example}

\begin{example}\label{const-PB}
    A slightly more non-trivial example is provided by the symplectic realization of constant Poisson brackets. Consider the vector space $X=\mathbb{R}^n$ equipped with a constant bivector $\pi$. The corresponding symplectic groupoid is now given by the symplectic vector space $\mathbb{R}^{2n}=\mathbb{R}^n\times \mathbb{R}^n$ endowed with the canonical symplectic  form $\omega=\dd p_i\wedge \dd q^i$ with $i=1,\ldots, n$. 
     In coordinates, the source and target maps are defined by the Bopp shifts
     $$
     \sfs^i(q,p)=q^i-\tfrac12\,\pi^{ij}\,p_j\qquad\text{and}\qquad \sft^i(q,p)=q^i+\tfrac12\,\pi^{ij}\,p_j\ .
     $$
     The orbit $\sft\circ \sfs^{-1}(q)$ is given by the affine symplectic plane in $(\mathbb{R}^n, \pi)$ passing through the point $(q,0)$,  so that the groupoid is transitive if and only if the bivector $\pi$ is nondegenerate.  
     
    The product of two composable elements $(q,p)$ and $(q',p')$ with 
    $$
    \sft(q,p)=x=\sfs(q',p') \ \in \ \mathbb{R}^n
    $$
    reads
    \begin{equation}\label{comp}
    \begin{array}{c}
    \big(x^i-\frac12\,\pi^{ij}\,p_j, \; p_j\big)\,\big(x^i+\frac12\,\pi^{ij}\,p'_j, \; p'_j\big)=\big(x^i+\frac12\,\pi^{ij}\,(p'_j-p_j), \; p_j+p_j'\big)\ .
    \end{array}
    \end{equation}
    Again the canonical symplectic structure $\omega$ obeys the multiplicativity condition
    \begin{align*}
    \dd\big(x^i+\tfrac12\,\pi^{ij}\,(p'_j-p_j)\big)\wedge \dd(p_i+p'_i)
    = \dd\big(x^i-\tfrac12\,\pi^{ij}\,p_j\big)\wedge \dd p_i+ 
    \dd\big(x^i+\tfrac12\,\pi^{ij}\,p'_j\big)\wedge \dd p'_i\ .
    \end{align*}
By construction
$$
\{\sfs^i,\sfs^j\}=\pi^{ij}\quad,\quad \{\sft^i,\sft^j\}=-\pi^{ij}\quad,\quad \{\sfs^i, \sft^j\}=0\ .
$$
In the new coordinates $(x,p)$, where $x^i=\sfs^i(q,p)$, the symplectic form on $\mathbb{R}^{2n}$ is given by $$
\omega=\dd p_i\wedge \dd x^i+\tfrac12\, \pi^{ij}\,\dd p_i\wedge \dd p_j\ .
$$ 
\end{example}

\begin{example} 
The Poisson structure of any symplectic manifold $(M, \sigma)$ is  integrable. As an integrating  symplectic groupoid, one can always take the {\it pair groupoid} $M\times M\rightrightarrows M$ defined by the structure maps
$$
\sfs(x,y)=x\quad,\quad \sft(x,y)=y\quad,\quad (x,y)\,(y,z)=(x,z)\quad,\quad (x,y)^{-1}=(y,x)\ .
$$
The submanifold of units is identified with the diagonal of $M\times M$, which is diffeomorphic to $M$. The pair groupoid is transitive. The compatible symplectic structure on $M\times M$ is given by the difference of pullbacks
$$
\omega=\sfs^\ast(\sigma)-\sft^\ast(\sigma)\ .
$$
By construction, the source (target) projection is a Poisson (anti-Poisson) map. 
\end{example}

\section{Bisections}
\label{sec:bisections}

As discussed in \S\ref{sec:intro}, symplectic groupoids will play the role of phase spaces of  point particles living on spacetimes with Poisson structures. To introduce electromagnetic couplings of these particles we need one more fundamental concept related to groupoids, namely the notion of `bisection', which we now explain and develop in some detail in anticipation of our applications to Poisson electrodynamics.

\paragraph{\underline{\textsf{Bisections of Lie groupoids.}}}

\begin{definition}
    Let $G\rightrightarrows X$ be a Lie groupoid. A submanifold $\Sigma\subset G$ is a \emph{bisection} if 
    $\sfs|_\Sigma $ and $ \sft|_\Sigma$ are diffeomorphisms onto $X$.  
    \end{definition}

To each bisection $\Sigma$ there corresponds a pair of maps $\Sigma_\sfs: X\rightarrow G$ and $\Sigma_\sft: X\rightarrow G$ defined as $\Sigma_\sfs= (\sfs|_\Sigma)^{-1}$ and $\Sigma_\sft=(\sft|_{\Sigma})^{-1}$, and hence the name.  

The set of all bisections $\mathscr{B}(G)$ of $G$ has the  structure  of a regular Lie group for the multiplication operation~\cite{Rybicki2001OnTG}
$$
\Sigma_1\cdot \Sigma_2:=\big\{g_1\,g_2\;\big|\; (g_1,g_2)\in(\Sigma_1\times\Sigma_2)\cap(G\ast G)\big\}\ ,
$$
where $\Sigma=X$ plays the role of the unit and the inverse of $\Sigma$ is $\Sigma^{-1}=\{g^{-1}\;|\;g\in\Sigma\}$. 

The group $\mathscr{B}(G)$  acts naturally from the left and right on $G$ as
\begin{equation}\label{LRT}
L_\Sigma (g)= \sigma_1\, g \qquad\text{and}\qquad R_\Sigma(g)=g\, \sigma_2 \ ,
\end{equation}
where $\sigma_1$ and $\sigma_2$ are the unique elements of $\Sigma$ satisfying   $\sft(\sigma_1)=\sfs(g)$ and $\sft(g)=\sfs(\sigma_2)$, namely $\sigma_1=\Sigma_\sft\circ\sfs(g)$ and $\sigma_2=\Sigma_\sfs\circ\sft(g)$. By associativity, the right and left actions commute. Sometimes we will denote them as $L_\Sigma(g)=\Sigma\, g$ and $R_\Sigma(g)=g\,\Sigma$, see Figure~\ref{Gr2}. Notice that $\Sigma_\sft(x)=\Sigma\, x$ and $\Sigma_\sfs(x)=x\,\Sigma$ for all $x\in X$. It follows from the definition that the right (left) action of $\mathscr{B}(G)$  preserves $\sfs$-fibres ($\sft$-fibres) while taking $\sft$-fibres to $\sft$-fibres ($\sfs$-fibres to $\sfs$-fibres).

\begin{figure}[ht]
\center{
\Large
\begin{tikzpicture}[scale=1.5]
%%%%%%%%%%%%%%%%%%%%%%%%%%%%%%%%%%%%%%%%%%% b %%%%%%%%%%%%%%%%%%%%%%%%%%%%%%%%%%%%%%%%%%%

\draw[] (6.5,0) -- (11.5,0);
\draw[-, color=blue] (9.75,-1.5) -- (11.5,2);
\draw[-, color=blue] (7.25,-1.5) -- (9,2);
\draw[-, color=blue] (8.25,-1.5) -- (10,2);
\draw[-, color=blue] (9.25,-1.5) -- (11,2);
\draw[-, color=blue] (6.75,-1.5) -- (8.5,2);
\draw[-, color=blue] (7.75,-1.5) -- (9.5,2);
\draw[-, color=blue] (8.75,-1.5) -- (10.5,2);

\draw[-, color=red] (9.5,2) -- (11.25,-1.5);
\draw[-, color=red] (7,2) -- (8.75,-1.5);
\draw[-, color=red] (8,2) -- (9.75,-1.5);
\draw[-, color=red] (9,2) -- (10.75,-1.5);
\draw[-, color=red] (6.5,2) -- (8.25,-1.5);
\draw[-, color=red] (7.5,2) -- (9.25,-1.5);
\draw[-, color=red] (8.5,2) -- (10.25,-1.5);

\fill [black]  (8.5,0) circle (1.2pt);
\fill [black]  (8,1) circle (1.2pt);
\fill [black]  (8.75,0.5) circle (1.2pt);
\fill [black]  (8.25, 1.5) circle (1.2pt);
\fill [black]  (10,1) circle (1.2pt);
\fill [black]  (9.5,0) circle (1.2pt);
\fill [black]  (10.5,0) circle (1.2pt);

\coordinate [label=right:${}_{{}_{h}}$] (A) at (8.75, 0.6);
\coordinate [label=left:${}_{{}_{g}}$] (A) at (8, 0.93);
\coordinate [label=right:${}_{{}_{\!\Sigma\, h=g\,\Sigma'}}$] (A) at (8.25, 1.5);
\coordinate [label=below:${}_{{}_{\sft\!(\!g\!)=\sfs\!(\!h\!)}}$] (A) at (8.51, 0.08);
\coordinate [label=below:${}_{{}_{x}}$] (A) at (9.5, 0);
\coordinate [label=below:${}_{{}_{\quad r_\Sigma(x)}}$] (A) at (10.5, 0);
\coordinate [label=right:${}_{{}_{x\Sigma }}$] (A) at (10, 1);

\coordinate [label=:${}_{X}$] (A) at (6.75, 0);
\coordinate [label=:${}_{\Sigma}$] (A) at (6.6, 1.1);
\coordinate [label=below:${}_{\Sigma'}$] (A) at (11.2, 0.6);

\draw [] (6.5, 1) ..controls (7.5, 1.9)  and (8.2,-0.2).. (11.5,1.63);
\draw [] (6.5, 0.5) ..controls (8, 1.2)  and (9,-0.15).. (11.3,0.7);
\end{tikzpicture}
}
\caption{\small The left and right actions of bisections.}
\label{Gr2}
\end{figure}
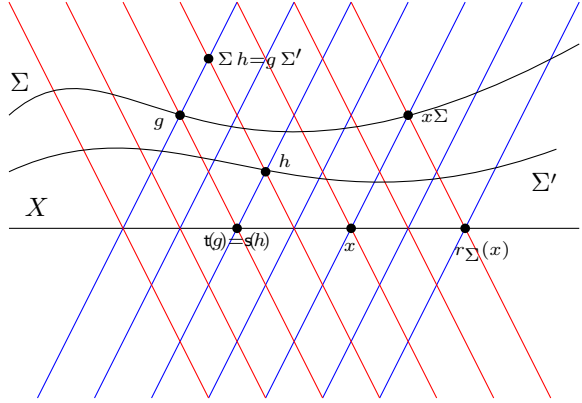

\begin{example}
    In the case of a group $G$ regarded as a groupoid over the identity element (see Example \ref{GG}), bisections are just elements of $G$ and the group $\mathscr{B}(G)$ is isomorphic to $G$. The left and right actions of $\mathscr{B}(G)$ correspond to left and right translations on $G$. From this perspective, the concept of bisection generalises the left and right translations on groups to the case of Lie groupoids. 
\end{example}

The group $\mathscr{B}(G)$ also acts by diffeomorphisms on the base manifold as
\begin{equation}\label{lt}
 l_\Sigma(x)= \sfs\circ\Sigma_\sft(x) \qquad\text{and}\qquad  r_\Sigma(x)=\sft\circ\Sigma_\sfs(x)\ , 
\end{equation}
for all $\Sigma\in\mathscr{B}(G)$ and $x\in X$.
The diffeomorphisms $l_\Sigma$ and $r_\Sigma$ are mutually inverse to each other.  Hence each bisection $\Sigma$ is completely specified by either of $\Sigma_\sfs$ or $\Sigma_\sft$. 

Let $\mathscr{B}(G)^{\mathrm{op}}$ denote the opposite group to  $\mathscr{B}(G)$. Then the formulas (\ref{LRT}) and (\ref{lt}) define group homomorphisms
\begin{align*}
L: \mathscr{B}(G)\longrightarrow \mathsf{Diff}(G)\qquad &\text{and} \qquad l : \mathscr{B}(G)\longrightarrow \mathsf{Diff}(X)\ ,\\[4pt]
R: \mathscr{B}(G)^{\mathrm{op}}\longrightarrow \mathsf{Diff}(G)\qquad &\text{and} \qquad r : \mathscr{B}(G)^{\mathrm{op}}\longrightarrow \mathsf{Diff}(X)\ .
\end{align*}
These left and right actions give the spaces of differential forms $\sfLambda(G)$ and $\sfLambda(X)$ the structures of modules over the groups 
$\mathscr{B}(G)$ and $\mathscr{B}(G)^{\mathrm{op}}$, respectively.  In the following we will need  

\begin{proposition}\label{p1}
For any multiplicative $k$-form $\alpha\in \sfLambda^k (G)$ define the pair of maps 
\begin{align}\label{ST}
\begin{split}
  S_\alpha :\mathscr{B}(G)\longrightarrow \sfLambda^k(X) \ , &  \quad \Sigma\longmapsto S^\Sigma_\alpha =\Sigma_\sfs^\ast\alpha\ ;\\[4pt]
     T_\alpha :\mathscr{B}(G)^{\mathrm{op}}\longrightarrow \sfLambda^k(X)\ , &\quad  \Sigma\longmapsto T^\Sigma_\alpha =\Sigma_\sft^\ast\alpha\ .
     \end{split}
\end{align}
Then
\begin{equation}\label{RL}
    R^\ast_\Sigma \alpha=\alpha +\sft^\ast S^\Sigma_\alpha \qquad\text{and}\qquad L^\ast_\Sigma \alpha=\alpha+\sfs^\ast T^\Sigma_\alpha \ .
\end{equation}
Furthermore, the maps (\ref{ST}) are crossed homomorphisms from the groups of bisections to the corresponding modules:
\begin{equation}\label{CH}
    S^{\Sigma_1\cdot\Sigma_2}_\alpha =S^{\Sigma_1}_\alpha +r^\ast_{\Sigma_1} S^{\Sigma_2}_\alpha \qquad\text{and}\qquad T^{\Sigma_1\cdot\Sigma_2}_\alpha =T^{\Sigma_2}_\alpha +l^\ast_{\Sigma_2} T^{\Sigma_1}_\alpha
\end{equation}
for all $\Sigma_1, \Sigma_2\in \mathscr{B}(G)$. 
\end{proposition}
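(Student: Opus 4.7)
My plan is to reduce both identities to the multiplicativity equation $\sfm^\ast\alpha = \sfp_1^\ast\alpha + \sfp_2^\ast\alpha$ on $G\ast G$ by factoring each map in sight through the multiplication $\sfm$, and then applying routine naturality of pullbacks. The only inputs used beyond this are the defining identities $\sfs\circ\Sigma_\sfs = \mathrm{id}_X$, $\sft\circ\Sigma_\sft = \mathrm{id}_X$, and the formulas (\ref{lt}) for the induced diffeomorphisms on $X$.

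For the identities (\ref{RL}), I would first write $R_\Sigma = \sfm\circ\Phi_\Sigma$, where
\[
\Phi_\Sigma:G\longrightarrow G\ast G \ , \quad g\longmapsto \bigl(g,\,\Sigma_\sfs(\sft(g))\bigr) \ ,
\]
using (\ref{LRT}); the image really lies in $G\ast G$ because $\sfs\circ\Sigma_\sfs = \mathrm{id}_X$. Then multiplicativity gives $R_\Sigma^\ast\alpha = \Phi_\Sigma^\ast(\sfp_1^\ast\alpha + \sfp_2^\ast\alpha)$. The first summand equals $\alpha$ since $\sfp_1\circ\Phi_\Sigma = \mathrm{id}_G$, while the second equals $\sft^\ast\Sigma_\sfs^\ast\alpha = \sft^\ast S^\Sigma_\alpha$ since $\sfp_2\circ\Phi_\Sigma = \Sigma_\sfs\circ\sft$. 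The identity $L_\Sigma^\ast\alpha = \alpha + \sfs^\ast T^\Sigma_\alpha$ follows by the symmetric argument using the factorization $L_\Sigma = \sfm\circ\Psi_\Sigma$ with $\Psi_\Sigma(g) = (\Sigma_\sft(\sfs(g)),g)$.

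For the crossed homomorphism property (\ref{CH}), the key preliminary step is to describe the source and target sections of the product bisection. Unraveling the definition $\Sigma_1\cdot\Sigma_2 = \{g_1\,g_2 \mid (g_1,g_2)\in(\Sigma_1\times\Sigma_2)\cap(G\ast G)\}$ and using (\ref{lt}), one finds that the unique element of $\Sigma_1\cdot\Sigma_2$ lying over $x\in X$ under $\sfs$ is
\[
(\Sigma_1\cdot\Sigma_2)_\sfs(x) = (\Sigma_1)_\sfs(x)\cdot(\Sigma_2)_\sfs\bigl(r_{\Sigma_1}(x)\bigr) \ ,
\]
and dually the $\sft$-preimage of $y\in X$ is $(\Sigma_1)_\sft(l_{\Sigma_2}(y))\cdot(\Sigma_2)_\sft(y)$. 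Writing $(\Sigma_1\cdot\Sigma_2)_\sfs = \sfm\circ\Xi$ with $\Xi(x) = ((\Sigma_1)_\sfs(x),\,(\Sigma_2)_\sfs(r_{\Sigma_1}(x)))$ and pulling back $\sfm^\ast\alpha = \sfp_1^\ast\alpha + \sfp_2^\ast\alpha$ through $\Xi$ splits into exactly the two claimed terms $S^{\Sigma_1}_\alpha + r_{\Sigma_1}^\ast S^{\Sigma_2}_\alpha$. The same manipulation applied to the target factorization produces $T^{\Sigma_1\cdot\Sigma_2}_\alpha = l_{\Sigma_2}^\ast T^{\Sigma_1}_\alpha + T^{\Sigma_2}_\alpha$, with the reversed ordering reflecting the fact that $T_\alpha$ intertwines the opposite group structure on $\mathscr{B}(G)$.

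No substantive obstacle is expected. The delicate point is purely bookkeeping: one must keep straight which of $\Sigma_\sfs,\Sigma_\sft$ sits in which slot of $G\ast G$, and correspondingly whether $r_\Sigma$ or $l_\Sigma$ surfaces in the cocycle term. Once the maps are put in the appropriate factored form, both identities are immediate consequences of multiplicativity.
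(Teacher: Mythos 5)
Your argument for the identities (\ref{RL}) is essentially the paper's: both factor $R_\Sigma$ through the multiplication map via $g\mapsto\bigl(g,\Sigma_\sfs(\sft(g))\bigr)$ and apply multiplicativity, with $\sfp_1$ giving $\alpha$ and $\sfp_2=\Sigma_\sfs\circ\sft$ giving $\sft^\ast S^\Sigma_\alpha$. For the crossed-homomorphism property (\ref{CH}), however, you take a genuinely different (and equally correct) route. The paper deduces (\ref{CH}) as a corollary of (\ref{RL}): it expands $R^\ast_{\Sigma_1\cdot\Sigma_2}\alpha=R^\ast_{\Sigma_1}R^\ast_{\Sigma_2}\alpha$ using the anti-homomorphism property of the right action, substitutes the already-proved formula for $R^\ast_{\Sigma_2}\alpha$, and then pulls everything back along the unit embedding $\sfi:X\hookrightarrow G$ using $R_\Sigma\circ\sfi=\Sigma_\sfs$ and $\sft\circ\Sigma_\sfs=r_\Sigma$. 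You instead derive the explicit formula $(\Sigma_1\cdot\Sigma_2)_\sfs(x)=(\Sigma_1)_\sfs(x)\cdot(\Sigma_2)_\sfs\bigl(r_{\Sigma_1}(x)\bigr)$ for the source section of the product bisection and hit it with multiplicativity a second time. Your version is more self-contained and yields the useful closed-form expressions for $(\Sigma_1\cdot\Sigma_2)_\sfs$ and $(\Sigma_1\cdot\Sigma_2)_\sft$ as a by-product, at the cost of an extra composability check; the paper's version recycles (\ref{RL}) and needs no explicit description of the product bisection. A further point in your favour: the paper proves only the first identity in each of (\ref{RL}) and (\ref{CH}) and leaves the $\sft$-versions to the reader, whereas you sketch both sides, correctly tracking that the reversed ordering in the $T_\alpha$ identity reflects the opposite group structure.
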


\begin{proof}
We will prove only the first identities in (\ref{RL}) and (\ref{CH}), leaving the proof of the remaining two to the reader. 
Given a bisection $\Sigma\in\mathscr{B}$, consider the submanifold of composable elements
$$
G\ast \Sigma =\{(g,\sigma)\in G\times\Sigma\;|\; \sft(g)=\sfs(\sigma)\}\subset G\times G\ .
$$
Then $G\ast \Sigma\simeq G$ and one can write each element of $G\ast \Sigma$ as $\big(g, \Sigma_\sfs(\sft(g))\big)$ for some $g\in G$. Hence 
$$
\sfp_1\big(g, \Sigma_\sfs(\sft(g))\big)=g\quad,\quad \sfp_2\big(g,\Sigma_\sfs(\sft(g))\big)=\Sigma_\sfs(\sft(g))\quad,\quad \sfm(g,\sigma)=g\,\sigma =g\,\Sigma\ .
$$
Since $\alpha$ is multiplicative, we can write
\begin{equation}\label{rel}
    R^\ast_\Sigma \alpha=\sfm^\ast\alpha=\sfp^\ast_1\,\alpha+\sfp_2^\ast\,\alpha=\alpha +\sft^\ast\, \Sigma_\sfs^\ast\alpha\ .
\end{equation}
This proves the first relation in (\ref{RL}). 

On substituting $\Sigma=\Sigma_1\cdot\Sigma_2$, we then find 
\begin{align}\label{RR}
\begin{split}
    R^\ast_{\Sigma_1\cdot\Sigma_2} \alpha&=\big(R_{\Sigma_2}\circ R_{\Sigma_1}\big)^\ast \alpha =R^\ast_{\Sigma_1}\,R^\ast_{\Sigma_2} \alpha \\[4pt]
   &= R^\ast_{\Sigma_1}\big(\alpha +\sft^\ast\, (\Sigma_2)_\sfs^\ast\alpha\big)
    =R^{\ast}_{\Sigma_1}\alpha +R^\ast_{\Sigma_1}\,\sft^\ast\, (\Sigma_2)_\sfs^\ast\alpha\ .
    \end{split}
\end{align}
Denote by  $\sfi: X\hookrightarrow G$ the embedding of the base manifold as the submanifold of units $X\subset G$. It obeys the identity
$R_\Sigma \circ \sfi=\Sigma_\sfs$. Applying now the pullback map $\sfi^\ast$ to  (\ref{RR}), we get
$$
(\Sigma_1\cdot\Sigma_2)_\sfs^\ast\alpha=(\Sigma_1)_\sfs^\ast \alpha +r_{\Sigma_1}^\ast \, (\Sigma_2)_\sfs^\ast\alpha
$$
and the first relation in (\ref{CH}) follows.\footnote{It seems that the first proof of these relations was published in \cite[Thm.~7.11]{kosmann2015multiplicativity}.}
\end{proof}

\begin{remark}
Acting with the map $\sfi^\ast$ on both sides of (\ref{rel}) and using the identity $\sft\circ \sfi=\mathrm{id}_X$, we arrive at one more interesting relation: $\alpha|_X=0$. That is, the restriction of any multiplicative form $\alpha$ to the base of a Lie groupoid $G\rightrightarrows X$ vanishes.  
\end{remark}

\paragraph{\underline{\textsf{Lagrangian bisections.}}}

For symplectic groupoids, one can define the subset $\mathscr{L}(G)$ of \emph{Lagrangian bisections}. By definition
\begin{equation*}
    \Sigma\in \mathscr{L}(G)\quad \Longleftrightarrow\quad \omega\big|_\Sigma=0 \ .
\end{equation*}
The Lagrangian bisections form a subgroup of the group $\mathscr{B}(G)$. Furthermore, the left and right translations (\ref{LRT}) are symplectomorphisms of $(G,\omega)$ if and only if the bisection  $\Sigma$ is Lagrangian~\cite[Lem.~9.11]{Visman}: this follows immediately from the relations (\ref{RL}), with $\alpha=\omega$, by noting that the condition $\omega|_\Sigma=0$ is equivalent to either of the two equations $\sft^\ast\, \Sigma^\ast_\sfs\omega=0$ or $\sfs^\ast\, \Sigma^\ast_\sft\omega=0$.

\begin{example}\label{Ex212}
Consider the symplectic groupoid of Example \ref{zero-PB}. This integrates the zero Poisson structure on $X$ and is given by the cotangent bundle $T^\ast X$. Since the source and target maps coincide here with the canonical projection $\sfp: T^\ast X\rightarrow X$, the group $\mathscr{B}(T^*X)$ is just the additive group of sections $\sfGamma(T^\ast X)=\sfLambda^1(X)$, i.e., the space of one-forms on $X$. Then one easily sees that a one-form $A\in \sfLambda^1(X)$ defines a Lagrangian bisection if and only if it is closed, $\dd A=0$. The closed one-forms constitute a subgroup $\sfZ\sfLambda^1(X)$ of the group of all one-forms in accordance with the general inclusion $\mathscr{L}(G)\subset \mathscr{B}(G)$. The coinciding of the source and target maps also implies that the group $\sfLambda^1(X)$ acts trivially on $X$. 
\end{example}

\begin{example}\label{const-PB-BS}
Let us now explore the group of bisections associated with the symplectic groupoid of Example~\ref{const-PB}. Consider a pair of bisections $\Sigma^A$ and $\Sigma^B$ defined in Darboux coordinates by the maps
\begin{equation*}
    \Sigma_\sfs^A(x)=\big(x^i+\tfrac 12\,\pi^{ij}\,A_j(x), A_k(x)\big)\qquad\text{and}\qquad \Sigma^B_\sft(x)=\big(x^i-\tfrac12\,\pi^{ij}\, B_j( x), B_k(x)\big)    
    \end{equation*}
 for some one-forms $A=A_i(x)\,\dd x^i$ and  $B=B_i(x)\,\dd x^i$ on $\mathbb{R}^n$. Here we adapt the parametrization of the bisections $\Sigma^A$ and $\Sigma^B$ by one-forms to the $\sfs$-projection and to the $\sft$-projection, respectively.  According to (\ref{comp}) the product of these bisections is given by  
\begin{equation}\label{ABC1}
   \Sigma^{A'} := \Sigma^B\cdot\Sigma^A= \big(x^i+\tfrac12\,\pi^{ij}\,\big(A_j(x)-B_j(x)\big), A_k(x)+B_k(x)\big)\ .
\end{equation}
Notice that the parametrization of $\Sigma^{A'}$  is adapted to neither the $\sfs$-projection nor the $\sft$-projection. 

Suppose the one-form  $B$ is infinitesimally small so that $\Sigma^B$ is close to the base $X=\mathbb{R}^n$. 
Then by shifting $x^i\rightarrow  x^i+\pi^{ij}\, B_j(x)$ we can  write
\begin{equation}\label{ABC2}
     \Sigma^{A'}=\big (x^i+\tfrac12\,\pi^{ij}\, A'_j(x), A'_k(x) \big)\ ,
     \end{equation}
where 
\begin{equation*}
    A'_i=A_i+B_i+\partial_j A_i\,\pi^{jk}\, B_k+O(B^2)\ .
\end{equation*}
This gives the variation of the one-form $A$ under infinitesimal left translations as
\begin{equation*}
\delta_B A_i=A'_i - A_i= B_i +\partial_j A_i\,\pi^{jk}\, B_k\ . 
\end{equation*}

The condition that the bisection $\Sigma^B$ is Lagrangian takes the form 
\begin{equation*}
    \dd q^i\wedge \dd p_i\big|_{\Sigma^B}=\dd B+O(B^2)=0\ .
\end{equation*}
 Locally it implies that $B=\dd\varepsilon$ for some small function $\varepsilon(x)$. In this way we arrive at the standard infinitesimal gauge transformations
 \begin{equation*}
     \delta_\varepsilon A_i=\partial_i\varepsilon+\{A_i,\varepsilon\}
 \end{equation*}
  for Poisson electrodynamics, cf. (\ref{SW2}). 
  
 By comparing  (\ref{ABC1}) with (\ref{ABC2}) one can also write a formula for finite gauge transformations that integrate the infinitesimal gauge transformations. A straightforward computation yields 
\begin{equation*}
    A'_i(x)=A_i(y)+B_i(y)\,,
\end{equation*}
 where $y=y(x)$ is determined by the equation 
\begin{equation*}
    x^i=y^i-\pi^{ij}\, B_j(y)= l_{\Sigma^B}(y)^i\ .
\end{equation*}
In other words, $y=r_{\Sigma^B}(x)$. 
The condition that $\Sigma^B$ is Lagrangian now takes the form 
\begin{align*}
\big(\Sigma^B\big)_\sft^\ast(\dd q^i\wedge \dd p_i)&=\dd\big(x^i-\tfrac12\, \pi^{ij}\,B_j(x)\big)\wedge \dd B_i(x)\\[4pt]
   &= -\tfrac12\,\big(\partial_i B_j-\partial_j B_i+\pi^{kl}\,\partial_i B_k\,\partial_j B_l\big)\,\dd x^i\wedge \dd x^j=0\ . 
\end{align*}
\end{example}

\begin{example}
Generalizing Example~\ref{const-PB-BS}, we consider infinitesimal left translations on the group of bisections associated with a local symplectic groupoid. In the setting of \S\ref{sec:symplgr}, let $(X,\pi)$ be an $n$-dimensional Poisson manifold with local coordinate chart $(U,q^i)$, and let $W$ be a neighbourhood of the zero section in $T^*U$, parametrized by $(q^i,p_j)$.
    For a pair of composable points $(\tau(x,p_1), p_1)$ and $(\sigma(x,p_2), p_2)$ we can write
    \begin{equation}\label{tss}
        \big(\tau(x,p_1), p_1\big)\,\big(\sigma(x,p_2), p_2\big)=\big(\sigma(x',p_3), p_3\big)\ ,
        \end{equation}
where $x\in U$ while the maps $\tau$ and $\sigma$ are defined by(\ref{ODE}) and (\ref{int}). With these equations, one can see that 
\begin{equation*}
    \tau^i (x, p)=x^i-\tfrac12\,\pi^{ij}(x)\,p_j+O(p^2) \qquad \text{and} \qquad s^i\big(\tau(x,p), p\big)=x^i-\pi^{ij}(x)\,p_j + O(p^2)\ .
\end{equation*}
Equating the sources in the left-hand side and right-hand side of (\ref{tss}), we find immediately
\begin{equation*}
    x'=x-\pi(x)\,p_1+O(p_1^2)\ .
\end{equation*}

To determine $p_3$, consider the Hamiltonian $h_\xi(x)=\sfs^i(q,p)\,\xi_i$, where   $\xi\in \mathbb{R}^n$ is a vector-valued parameter. It generates the Hamiltonian flow $\varphi_t : W\rightarrow W$ with respect to the Poisson brackets (\ref{PBr}) given by
\begin{equation*}
    \frac{\dd q^i}{\dd t}=\{h_\xi, q^i\}=-\frac{\partial \,\sfs^j}{\partial p_i}\,\xi_j\qquad\text{and}\qquad   \frac{\dd p_i}{\dd t}=\{h_\xi, p_i\}=\frac{\partial \,\sfs^j}{\partial q^i}\,\xi_j\ .
\end{equation*}
For small $\xi$ we find that the map $(q(1), p(1)) =\varphi_1(q(0), p(0))$ is given by 
\begin{align*}
q^i(1)&=q^i(0)-\xi_j\,\left(\frac{\partial\, \sfs^j}{\partial p_i}\right)\big(q(0),p(0)\big)+O(\xi^2)\ ,\\[4pt]
    p_i(1)&=p_i(0)+\xi_j\,\left(\frac{\partial\, \sfs^j}{\partial q^i}\right)\big(q(0),p(0)\big)+ O(\xi^2)\ .
\end{align*}
Hence setting  $\xi=p_1$ and using the identity 
$$
\left(\frac{\partial \,\sfs^j}{\partial p_i}\right)(q,0)=\frac12\,\pi^{ij}(q)\ ,
$$
we obtain $\varphi_1(x, 0)=(\tau(x,p_1),p_1)$ for any infinitesimally small $p_1$.  Applying the formula (\ref{ab}) then yields 
\begin{align}\label{pr1}
\begin{split}
\big(\tau(x,p_1), p_1\big)\,\big(\sigma(x,p_2), p_2\big) &= \varphi_1 \big(\sigma(x,p_2), p_2\big)\\[4pt] &= \Big(\sigma \big(x-\pi(x)\,p_1, p_2+\gamma(x, p_2)\,p_1\big)\,,\, p_2+\gamma(x, p_2)\,p_1\Big)+O(p_1^2)\ ,
\end{split}
\end{align}
 where 
\begin{equation*}
    \gamma^i_j(x,p)=\left(\frac{\partial \,\sfs^i}{\partial q^j}\right)\big(\sigma(x,p),p\big)\ .
\end{equation*}

 Using (\ref{pr1}), one can compute the product of two (local) bisections $\Sigma^B=\big(\tau(x, B(x)), B(x)\big)$ and $\Sigma^A=\big(\sigma(x,A(x)), A(x)\big)$ whenever $B(x)$ is infinitesimally small:
\begin{align*}
    \Sigma^B\cdot \Sigma^A=: \Sigma^{A'}=\big(\sigma(x, A'(x)), A'(x)\big)
    =\Big(\sigma \big(x-\pi(x)\, B, A+\gamma(x, A)\,B\big)\,,\, A+\gamma(x, A)\,B\Big)\ .
    \end{align*}
By making the shift $x^i\rightarrow x^i+\pi^{ij}(x)\, A_j$, we finally get 
\begin{equation}\label{CBA}
    A'_i=A_i+\gamma_i^j(x, A)\,B_j+\partial_k A_i\, \pi^{kj}(x)\, B_j\ .
\end{equation}
The section $\Sigma^B$ is Lagrangian if and only if $\dd B=0$ or, in our local setting, equivalently $B=\dd\varepsilon$ for some infinitesimal function $\varepsilon\in C^\infty(U)$. Then 
\begin{align*}
    \delta_\varepsilon A_i=A'_i-A_i=\gamma_i^j(x, A)\,\partial_j\varepsilon +\partial_k A_i\, \pi^{kj}(x)\, \partial_j\varepsilon
    \end{align*}
and we arrive at the infinitesimal gauge transformations (\ref{gtr}). 
\end{example}

Generally, the Lie algebra of the group of Lagrangian bisections $\mathscr{L}(G)$ of a symplectic groupoid $G\rightrightarrows X$ is isomorphic to the Lie algebra of closed one-forms $\sfZ\sfLambda^1 (X)$, see \cite[Thm. 4.5]{Rybicki2001OnTG}. The Lie bracket is given by
\begin{align*}
    [B_1,B_2]=\dd\big(\pi(B_1, B_2)\big) \ ,
\end{align*}
for all $B_1, B_2\in \sfZ\sfLambda^1(X)$.
The exact one-forms $\dd\sfLambda^0(X)$ constitute an ideal in $\sfZ\sfLambda^1(X)$ and the quotient $\sfZ\sfLambda^1(X)/\dd\sfLambda^0(X)=\sfH^1(X,\mathbb{R})$ is an abelian Lie algebra. The exterior differential  defines a Lie algebra homomorphism $\dd: \sfLambda^0(X)\rightarrow \dd\sfLambda^0(X)$ if one regards $\sfLambda^0(X)$ as the Lie algebra of functions with respect to the Poisson bracket $\{-,-\}$ of $(X,\pi)$. For a connected base $X$,  $\ker(\dd) = \sfZ\sfLambda^0(X)=\mathbb{R}$. Among other things, this implies that the commutator of two  gauge transformations (\ref{gtr}) is given by the formula (\ref{commut}). 

\section{Classical Poisson electrodynamics}
\label{sec:CPE}

Now we have all the necessary mathematical tools to formulate the dynamics of a point particle coupled to an external electromagnetic field on a spacetime which is a Poisson manifold $(X,\pi)$. For this, we identify the phase space of the particle with an integrating symplectic groupoid $G\rightrightarrows X$. We identify the group of bisections $\mathscr{B}(G)$ with the space of gauge potentials and the subgroup of Lagrangian bisections $\mathscr{L}(G)$ with the gauge group acting on gauge potentials by left translations:
$$
\Sigma \longmapsto \Sigma' \cdot \Sigma \ ,
$$
for all $\Sigma\in \mathscr{B}(G)$ and $ \Sigma' \in \mathscr{L}(G)$. In other words, the physical degrees of freedom of the electromagnetic field are identified with the elements of the left coset $\mathscr{B}(G)/\mathscr{L}(G)$. 

This identification follows from the analogy with ordinary spacetime, see Example~\ref{Ex212}: In the case $\pi=0$, the symplectic groupoid is just the cotangent bundle $T^\ast X$, the space of bisections coincides with the space of one-forms on $X$, and the space of Lagrangian bisections coincides with the space of closed one-forms on $X$. 
Altogether, the main ingredients are summarised in the glossary of Table~\ref{tab:glossary}. Below we explain these ingredients in more detail. 

\paragraph{\underline{\textsf{Minimal coupling.}}}

According to the first two items of Table~\ref{tab:glossary}, the Hamiltonian action functional of a free relativistic  particle has the form 
\begin{equation}\label{S}
    S_{\rm part}=\int_\gamma\, \theta -\lambda\, H\ ,
\end{equation}
where 
\begin{itemize}
    \item $\gamma: \mathbb{R}\rightarrow G$  is a phase space trajectory of the particle; 
    \item $\theta$ is a (locally defined) symplectic potential for the symplectic two-form $\omega$ on $G$, i.e., $\omega =\dd\theta$;
    \item $H$ is a smooth function on $G$;
    \item $\lambda$ is a one-form on $\gamma$ playing the role of the Lagrange multiplier for the Hamiltonian constraint $H\approx 0$ that expresses the strong conservation of the particle's mass.\footnote{For a free scalar particle of mass $m$ in Minkowski space, the Hamiltonian constraint $H=p^2-m^2\approx 0$ expresses the relativistic energy-momentum relation (in units where the speed of light is $c=1$), or equivalently the off-shell conservation of mass.}
    \end{itemize}
The action functional (\ref{S}) enjoys the standard gauge symmetry generated by the first class constraint $H\approx 0$. The Hamiltonian reduction by this constraint gives the physical phase space $\mathcal{P}=G/\!\!/ H$. 

This is reminescent of the approach of {\it elementary dynamical systems} \cite{Souriau}, in which the physical phase space $\mathcal{P}$ 
is postulated to be a homogeneous symplectic manifold for a group of fundamental symmetries $\Gamma$. For example, in the case of a free scalar particle in Minkowski space $\mathbb{R}^{3,1}$, the group $\Gamma$ is the Poincar\'e group acting by canonical transformations in a six-dimensional phase space. The phase space of a massive spinning particle is eight-dimensional \cite{Lyakhovich1996}. For coloured particles one takes $\Gamma$ to be the direct product of the Poincar\'e group with a group of internal symmetries, which leads to  further enlarging of the physical phase space \cite{DH}.  As is well-known, all homogeneous symplectic manifolds are  
exhausted (up to coverings) by the coadjoint orbits of the corresponding symmetry groups. This gives a complete classification of all elementary particles associated with a given group of fundamental symmetries $\Gamma$. 

Irrespective of global symmetries and a particular form of the Hamiltonian $H$, the model (\ref{S}) describes the correct number of physical degrees of freedom for a scalar particle on $X$. If the particle is electrically charged, the next issue is introducing its coupling to an external electromagnetic field. According to the third item of Table~\ref{tab:glossary}, the electromagnetic field is described by bisections $\Sigma\in\mathscr{B}(G)$ of the symplectic groupoid $G\rightrightarrows X$. 
The minimal interaction with the electromagnetic field is introduced by replacing the `free' Hamiltonian $H$ in (\ref{S}) with $H^\Sigma=R^\ast_\Sigma H$,  so that the action functional takes the form \begin{equation}\label{Ssigma}
    S_{\rm part}^\Sigma=\int_\gamma\, \theta-\lambda\, H^\Sigma\ .
\end{equation}

By construction, the Hamiltonian $H^\Sigma$ is a $\mathscr{B}(G)$-equivariant function on the phase space $G$; in particular, $H^{\Sigma'\cdot\Sigma}=R^\ast_{\Sigma'}H^\Sigma$ for all $\Sigma'\in \mathscr{L}(G)$. 
This allows one to compensate the gauge transformation of the electromagnetic field $\Sigma \rightarrow \Sigma'\cdot\Sigma$ by the canonical transformation $R^{-1}_{\Sigma'}: G\rightarrow G$ of the phase space. Since the latter modifies the symplectic potential $\theta$ by nothing more than an exact form, the action (\ref{Ssigma}) remains unchanged under such combined transformations of the gauge potentials and the phase space variables. 

\paragraph{\underline{\textsf{Field strength tensors.}}}

Having identified electromagnetic potentials with  the group of bisections $\mathscr{B}(G)$ and gauge parameters with  the subgroup of  Lagrangian bisections $\mathscr{L}(G)$, we would now like to introduce an object that measures the deviation of a bisection from being Lagrangian. 
In the case of conventional electrodynamics, this role is played by the field strength tensor $F=F(A)=\dd A$ for a one-form $A$ on $X$. For later comparison, we list its main properties:
\begin{enumerate}
\item[F1.] $F$ is a two-form on the spacetime manifold $X$;
\item[F2.] $F$ is closed, $\dd F=0$;
\item [F3.] $F$ is a local function of the gauge potential $A$, i.e. it depends on first order derivatives of $A$;
\item[F4.] $F$ is an additive function of  potentials: $F(A_1+A_2)= F(A_1)+F(A_2)$;
\item [F5.] $F$ is invariant under the gauge transformations $A\rightarrow A+\dd\varepsilon$ for functions $\varepsilon$ on $X$;
\item  [F6.] $F$ vanishes on pure gauge potentials, $F(\dd\varepsilon)=0$.
\end{enumerate}

It is desirable that the symplectic groupoid counterpart of the electromagnetic field strength tensor enjoys as many of these properties as possible. 
The natural candidates to start with are given by the expressions
\begin{align*}
    F^\sfs=F^\sfs(\Sigma)=S^\Sigma_\omega=\Sigma_\sfs^\ast\omega\qquad\text{and}\qquad F^\sft=F^\sft(\Sigma)=T^\Sigma_\omega=\Sigma_\sft^\ast\omega\ ,
\end{align*}
for any bisection $\Sigma\in\mathscr{B}(G)$.
Both quantities $F^\sfs$ and $F^\sft$ are closed two-forms on $X$, so they satisfy properties F1 and F2. Property~F6 also holds as the two-forms $F^\sfs$ and $F^\sft$ vanish on Lagrangian bisections by definition. 

Here the role of the additive group of gauge potentials is played by the (non-abelian) group of bisections $\mathscr{B}(G)$. By Proposition~\ref{p1},
$$
F^\sfs(\Sigma_1\cdot \Sigma_2)=F^\sfs(\Sigma_1) +r^\ast_{\Sigma_1}F^\sfs(\Sigma_2)\qquad\text{and}\qquad F^\sft(\Sigma_1\cdot \Sigma_2)=F^\sft(\Sigma_2) +l_{\Sigma_2}^\ast F^\sft(\Sigma_1)\ .
$$
These relations generalize the additivity property F4. In particular, for a pure gauge $\Sigma_1\in \mathscr{L}(G)$ we obtain 
$$
F^\sfs(\Sigma_1\cdot \Sigma_2)=r^\ast_{\Sigma_1}F^\sfs(\Sigma_2)\qquad\text{and}\qquad F^\sft(\Sigma_1 \cdot \Sigma_2)=F^\sft(\Sigma_2)\ .
$$
Hence the two-form $F^\sft$ is gauge invariant,  while $F^\sfs$ is only gauge covariant.
The two field strength tensors are related by the identities
\begin{equation}\label{ts}
    F^\sfs=r^\ast_\Sigma F^\sft\qquad\text{and}\qquad F^\sft=l^\ast_\Sigma F^\sfs\ .
\end{equation}
For example, 
$$
r^\ast_\Sigma F^\sft=r^\ast_\Sigma\, \Sigma^\ast_\sft\omega=(\Sigma_\sft\circ r_\Sigma)^\ast \omega =(\Sigma_\sft \circ \sft \circ \Sigma_\sfs)^\ast \omega=\Sigma_\sfs^\ast\omega=F^\sfs\ .
$$
It follows from (\ref{ts}) that whenever one of the field strength tensors is local (in the sense of property~F3) the other is not. 

\begin{remark}\label{r31}
The two-form $F^\sfs$ defines an equivariant map $F^\sfs: \mathscr{B}(G)\rightarrow \sfLambda^2(X)$ for the group of Lagrangian bisections $\mathscr{L}(G)$. 
Another simple way to assign such an equivariant map to every tensor field  $\mathcal{T}$ on $X$ is via
\begin{align*}
    \mathcal{T}^\Sigma=r^\ast_{\Sigma}\mathcal{T}\ .
\end{align*}
Then $\mathcal{T}^{\Sigma'\cdot\Sigma}=r_{\Sigma'}^\ast \mathcal{T}^\Sigma$ for any (not necessarily Lagrangian) bisection $\Sigma'$. Hence the map 
$\Sigma\mapsto \mathcal{T}^\Sigma$ is equivariant with respect to the left translations of $\mathscr{B}(G)$. The equivariant tensor fields form a closed subalgebra in the full tensor algebra\footnote{By a tensor algebra we understand the space of tensor fields endowed with the tensor product {\it and} the operations of permutation as well as contraction of indices.} of $X$. 
\end{remark}

Using Remark~\ref{r31}, one can introduce one more type of field strength tensor, which we call a {\it framed field strength tensor}. 
Let $\{e^a\}$ be a collection of one-forms on $X$ that constitute a coframe in $T^\ast U$ for some open domain $U\subset X$.  Then each form $e^a_\Sigma=r_\Sigma^\ast e^a$ defines a $\mathscr{B}(G)$-equivariant map from $\mathscr{B}(G)$ to $\sfLambda^1(X)$. Let $e_a^\Sigma$ denote the dual frame of 
$\mathscr{B}(G)$-equivariant vector fields on $U$. 
We  define the framed field strength tensor as 
$$
F_{ab}=F_{ab}(\Sigma) = F^\sfs(\Sigma)\big(e_a^\Sigma, e_b^\Sigma\big)\ .
$$
By construction, $F_{ab}$ is a collection of scalar functions on $X$ that form an anti-symmetric matrix $F_{ab}=-F_{ba}$. Each function defines an 
$\mathscr{L}(G)$-equivariant map: 
$$
F_{ab}(\Sigma'\cdot\Sigma)=r_{\Sigma'}^\ast F_{ab}(\Sigma) \ ,
$$
for all $\Sigma'\in \mathscr{L}(G)$. We also have $F^\sfs=F_{ab}\,e^a_\Sigma\wedge e^b_\Sigma$. 

\begin{example} \label{ex:Fconstpi}
Let us consider the constant Poisson structure $\pi$ of Examples \ref{const-PB} and \ref{const-PB-BS}.  We define a bisection $\Sigma$ by
\begin{align*}
    \Sigma_\sfs(x)=\big(x^i+\tfrac12\,\pi^{ij}\, A_j(x), A_k(x)\big)\ ,
\end{align*}
for a vector-valued function $A$ on $\mathbb{R}^n$. Then the right action of $\mathscr{B}(G)$ on $X=\mathbb{R}^n$ takes the form
     \begin{align*}
         r_\Sigma (x)^i=x^i+\pi^{ij}\,A_j(x)\ .
     \end{align*}
The equivariant field strength is given by 
  \begin{align*}
     F^\sfs= \Sigma^\ast_\sfs(\dd p_i\wedge \dd q^i)=\dd A_i\wedge \dd x^i +\tfrac12\, \pi^{ij}\,\dd A_i\wedge \dd A_j=\tfrac12\, F^\sfs_{ij}\,\dd x^i\wedge \dd x^j\ ,
     \end{align*}
     where
     $$
    F_{ij}^\sfs =\partial_iA_j-\partial_jA_i+\pi^{kl}\,\partial_iA_k\,\partial_jA_l\ .
     $$

 Applying the right translations to the basis one-forms $\dd x^i$,  we define the $\mathscr{B}(G)$-equivariant coframe 
  \begin{equation}\label{fr-om}
\omega^a=\omega^a_\Sigma=r^\ast_\Sigma\, \dd x^a= \dd r_\Sigma(x)^a=(\delta^a_i+\pi^{aj}\,\partial_iA_j)\,\dd x^i
  \end{equation}
in the cotangent bundle $T^*X$. Similarly, the tangent bundle $TX$ can be equipped with the $\mathscr{L}(G)$-equivariant frame of Poisson vector fields
  \begin{equation}\label{v}
      v_a=v_a^\Sigma=\big(\delta_a^i+\partial_jA_a\,\pi^{ji}\big)\,\frac{\partial}{\partial x^i}=\frac{\partial}{\partial x^a}+\{A_a,-\}\ .
  \end{equation}
Using the representation 
$ \pi^{ab}\,v_b=\{r^\ast_\Sigma(x^a), - \}$,
one can easily see that 
 \begin{align*}
     v_a^{\Sigma'\cdot\Sigma}=r^\ast_{\Sigma'}\big(v_a^{\Sigma} \big)\ ,
\end{align*}
for all $\Sigma' \in \mathscr{L}(G)$. 

The Lie brackets of frame fields are the Hamiltonian vector fields 
 \begin{align*}
     [v_a,v_b]=\{ \widehat F_{ab}, -\}\ ,
 \end{align*}
 whose Hamiltonians form the antisymmetric matrix
\begin{equation}\label{hatF}
    \widehat F_{ab}=\widehat F_{ab}(\Sigma)=\partial_a A_b-\partial_b A_a+\{A_a,A_b\}\ .
    \end{equation}
    It is this matrix that is usually taken as the standard `field strength tensor' of Poisson electrodynamics associated with a constant Poisson bivector, cf. (\ref{F}). By construction, the matrix elements are $\mathscr{L}(G)$-equivariant functions, i.e., $\widehat F_{ab}(\Sigma'\cdot\Sigma)=r^\ast_{\Sigma'}\widehat F_{ab}(\Sigma)$ for all $\Sigma'\in \mathscr{L}(G)$. 
    
    The frame $v_a$ and coframe $\omega^a$ are not canonically dual to each other. Instead one has
 \begin{equation}\label{vw}
     \omega^a(v_b)=\omega^a_i\,v^i_b= \delta^a_b- \widehat F_{bc}\,\pi^{ca}\qquad\text{and}\qquad \omega_i^a\, v_a^j=\delta^j_i-\pi^{jk}F^\sfs_{ki}\ .
 \end{equation}
 One can verify the identity
 \begin{equation*}
 F^\sfs_{ij}\,v^j_b=\omega^a_i\, \widehat F_{ab}\ .
 \end{equation*}
Now the framed field strength tensor associated with  (\ref{v}) is given by 
\begin{equation*}
  F_{ab}=v^i_a\, F^\sfs_{ij}\,v^j_b=\widehat F_{ab}-\widehat F_{ak} \, \pi^{kl} \, \hat F_{lb}\ .
  \end{equation*}
  With the help of (\ref{vw}) one can also find
  \begin{equation*}
      \omega^a_i\,\omega^b_j\, \widehat F_{ab}=F^\sfs_{ij}-F^\sfs_{ik}\,\pi^{kl}\,F^\sfs_{lj} \ .
  \end{equation*}
Finally, introducing the dual vector fields  \smash{$\widecheck \omega_a=\widecheck \omega{}_a^i\,\frac\partial{\partial x^i}$} to the coframe of one-forms (\ref{fr-om}), that is, \smash{$\omega^a(\widecheck \omega_b)=\delta_b^a$}, we can express  the components of the standard field strength through the covariant field strength tensor as
\begin{equation*}
       \widehat F_{ab}=\widecheck \omega{}^i_a\,\big(F^\sfs_{ij}-F^\sfs_{ik}\,\pi^{kl}\,F^\sfs_{lj}\big)\,\widecheck \omega{}^j_b\ .
       \end{equation*}
Both the covariant and the standard field strength tensors are local and vanish simultaneously. 
\end{example}

\paragraph{\underline{\textsf{Action functional for the electromagnetic field.}}} 

To describe the dynamics of the whole system 
$$
\mbox{charged particle \ + \ electromagnetic field}
$$
the particle action functional (\ref{Ssigma}) should be augmented by the action functional of the electromagnetic field itself.  Although a detailed discussion of the field theoretical aspects of Poisson electrodynamics is beyond the scope of this paper, we will briefly indicate some basic constructions here. 

As a main building block for the action functional, one can use 
 any of the field strength tensors $F$ discussed above. Additional ingredients for the construction are a background metric $g$ on $X$, a volume form $\mu_X$, the Poisson bivector $\pi$, and any other external fields. The main requirement is that the corresponding action functional 
\begin{equation*}
    S_{\rm em}=\int_X\, L(F, g, \pi, \ldots) \ \mu_X
\end{equation*}
be gauge invariant. 

The simplest way to satisfy the gauge invariance condition is to use the gauge invariant field strength tensor $F^\sft$.   Another option is to use the covariant field strength tensor $F^\sfs$ which, unlike $F^\sft$, is always local.  As explained in Remark \ref{r31}, one can make  every tensor field on $X$, such as the metric $g$, into an $\mathscr{L}(G)$-equivariant tensor field by applying right translations, e.g.  $g\rightarrow g^\Sigma=r^\ast_\Sigma g$. With the $\mathscr{L}(G)$-equivariant tensor fields $F^\sfs$, $g^\Sigma$, $\mu_X^\Sigma$, $\pi$, $\dots$ at hand, one can  then produce an equivariant Lagrangian $L=L(F^\sfs, g^\Sigma, \pi, \ldots)$ and define the gauge invariant action functional as 
\begin{equation}\label{SS}
    S_{\rm em}=\int_X\, L\big(F^\sfs, g^\Sigma, \pi, \ldots\big) \ \mu_X^\Sigma\ .
\end{equation}
Both options in fact lead to the same action functional whenever the Lagrangian does not depend on $\pi$.\footnote{The bivector $\widecheck {F}= \pi^\Sigma-\pi$ provides one more type of $\mathscr{L}(G)$-equivariant field strength tensor,  which could be called a {\it contravariant field strength tensor}. Unlike $F^\sfs$ and $F^\sft$, the contravariant field strength tensor  vanishes for $\pi=0$. For a constant Poisson structure, \smash{$\widecheck{F}{}^{ij}=\pi^{ia}\,\pi^{jb}\,\widehat F_{ab}$}.} The point is that the corresponding 
integrands are related by the diffeomorphism:
\begin{equation*}
    L(F^\sfs, g^\Sigma, \ldots ) \ \mu_X^\Sigma=r^\ast_\Sigma \big (L(F^\sft, g, \ldots) \ \mu_X \big)\ .
\end{equation*}

Besides the Poisson bivector $\pi$, there may exist some other $\mathscr{L}(G)$-invariant tensor fields on $X$, for example an invariant volume form $\mu_X$. For instance, if  $\pi$ is invertible then one can take $\mu_X$ to be the  canonical volume form on the symplectic manifold $(X, \pi^{-1})$. Replacing $\mu_X^\Sigma$ with an invariant volume form $\mu_X$ in (\ref{SS}) yields another gauge invariant action functional.  As is well known, not every Poisson manifold admits a volume form invariant under all Poisson diffeomorphisms: the obstruction is represented by the {\it modular class} of the Poisson manifold \cite{WEINSTEIN1997379}. The modular class vanishes for symplectic manifolds and for constant Poisson structures on $\mathbb{R}^n$. For the linear Poisson bracket on $\mathfrak{g}^\ast$, the modular class is zero if and only if the Lie algebra $\mathfrak{g}$ is unimodular. 
It should be emphasized that, regardless of the existence of an invariant volume form on $X$, one can always define an $\mathscr{L}(G)$-equivariant  form $\mu_X^\Sigma$, which is enough  for the purpose of constructing the gauge invariant action functional (\ref{SS}).   
\begin{example}
Consider $X=\mathbb{R}^n$ endowed with a constant Poisson bivector $\pi$ from Example~\ref{ex:Fconstpi}. Using the $\mathscr{B}(G)$-equivariant coframe fields \eqref{fr-om}, the simplest Lagrangian takes the form $L=g^{ij}\,g^{kl}\, F^\sfs_{ik}\,F^\sfs_{jl}$, where
\begin{equation*}
    g_{ij}=\eta^\Sigma_{ij}=\eta_{ab}\,\omega^a_i\,\omega^b_j=\eta_{ab}\,\big(\delta^a_i+\pi^{ak}\,\partial_i A_k\big)\,\big(\delta^b_j+\pi^{bl}\partial_j A_l\big)
\end{equation*}
and $\eta_{ab}$ is the Minkowski metric. One can take $\mu_X=\dd^nx=\dd x^1\wedge \cdots\wedge \dd x^n$ as an invariant volume form.\footnote{Another possibility is to put $\mu_X=\omega^1\wedge \cdots \wedge \omega^n=\sqrt{\det (g)} \ \dd^nx$. Unlike $\dd^nx$, this volume form depends on the electromagnetic field.} This Lagrangian reduces to the conventional Lagrangian of Maxwell's electrodynamics when $\pi= 0$.  However, a detailed comparison with the Lagrangian of noncommutative Maxwell theory suggests a  more complex extension given by
\begin{equation*}
    S[A]=\int_X\,\eta^{ab}\,\eta^{cd}\,\widehat F_{ac}\,\widehat F_{bd} \ \dd^nx 
=\int_X\,g^{ik}\,g^{jl}\,\big(F^\sfs_{ij}+ F^\sfs_{ir}\,\pi^{rs}\,F^\sfs_{sj}\big)\,\big(F^\sfs_{kl}+ F^\sfs_{kt}\,\pi^{tu}\,F^\sfs_{ul}\big) \ \dd^nx \ .
\end{equation*}

One further interesting possibility is to set 
\begin{equation*}
    S_\alpha [A]=\int_X\,  \sqrt{\det\big(g^\Sigma+\alpha\, F^\sfs\big)} \ \dd^nx \ ,
\end{equation*}
where $g$ is an arbitrary metric on $X$ and $\alpha\in\mathbb{R}$ is a constant. This may be regarded as a Poisson version of Born--Infeld electrodynamics.
For $\alpha =0$ the Lagrangian is a total divergence.  
\end{example}

\section{Examples}
\label{sec:ex}

We shall now flush out some detailed examples of the dynamics of charged particles coupled to an external electromagnetic field in Poisson electrodynamics.

\subsection{Constant Poisson structures}

Let $X=\mathbb{R}^n$ endowed with a constant Poisson bivector $\pi$, see Examples~\ref{const-PB}, \ref{const-PB-BS} and~\ref{ex:Fconstpi}. The minimal coupling is described by the Lagrangian
\begin{equation}\label{Ex1L}
    L=p_i\,\dot x^i -\tfrac12\,\pi^{ij}\,p_i\,\dot p_j -{\lambda}\,H(X,P)\ , 
\end{equation}
where an overdot denotes the time derivative, $\lambda$ is a Lagrange multiplier and $H$ is a function of the gauge invariant variables 
\begin{equation*}
  X^i=x^i+\tfrac12\,\pi^{ij}\,\big (A_j(x)-p_j\big)\qquad\text{and}\qquad   P_i=p_i+A_i(x)\ .
\end{equation*}
The infinitesimal gauge transformations 
\begin{equation*}
   \delta_\varepsilon A_i=\partial_i\varepsilon+\partial_j A_i\,\pi^{jk}\,\partial_k\varepsilon\quad,\quad \delta_\varepsilon x^i=-\pi^{ij}\,\partial_j \varepsilon\quad,\quad \delta_\varepsilon p_i=-\partial_i\varepsilon 
\end{equation*}
change the Lagrangian (\ref{Ex1L}) by a total derivative. 

For a scalar particle of mass $m$ in Minkowski space, it is natural to choose 
\begin{equation*}
    H=\eta^{ij}\,P_i\, P_j-m^2\ ,
\end{equation*}
where $\eta^{ij}$ is the (mostly minus) Minkowski metric on $\mathbb{R}^n$. Then the equations of motion read
\begin{equation}\label{EoM}
P_i\,P^i=m^2\quad,\quad \dot x^i=2\,\lambda\, v^i_j\, P^j\quad,\quad 
\dot P_i=-2\,\lambda\, \widehat F_{ij}\, P^j\ ,
\end{equation}
where
\begin{equation}\label{vF}
    v^i_j=\delta_j^i+\partial_k A_j\,\pi^{ki}\qquad\text{and}\qquad \widehat F_{ij}=\partial_i A_j-\partial_j A_i+
    \pi^{kl}\,\partial_k A_i\,\partial_l A_j\ ,
\end{equation}
and the indices are raised using the Minkowski metric $\eta^{ij}$. The first equation in (\ref{vF}) defines an equivariant frame of Poisson vector fields $v_j$, while the second equation is given by the components of the standard field strength (\ref{hatF}). 

Let $\widecheck v$ denote the inverse matrix to  $v$, i.e., $\widecheck v{}^i_k\, v^k_j=\delta^i_j$. Then we can solve the first two equations in (\ref{EoM}) for $\lambda$ and $P$ as 
\begin{equation}\label{lp}
\lambda^2=\frac1{4 m^2}\,g_{ij}\,\dot x^i\,\dot x^j\qquad\text{and}\qquad P^j=\frac1{2\lambda}\,\widecheck v{}^j_i\, \dot x^i\ .
\end{equation}
Here the symmetric tensor
\begin{equation*}
    g_{ij}=\widecheck v{}_i^k\,\eta_{kl}\,\widecheck v{}^l_j 
\end{equation*}
may be regarded as an effective  metric induced by the electromagnetic field; in this interpretation, the one-forms $\widecheck v{}^i$ play the role of a vielbein.  We can simplify these expressions by imposing the proper time gauge $\lambda=\frac12$ to fix the reparametrization invariance of the equations of motion. 

On substituting  (\ref{lp}) into the third equation of (\ref{EoM}), we finally obtain the second order differential equations for the particle's worldlines given by
\begin{equation}\label{GS}
     \ddot x^i+\widetilde{\Gamma}^i_{jk}\,\dot x^j\,\dot x^k=-g^{il}\,\widecheck v{}^r_l\,\widehat F_{rs}\, \widecheck v{}^s_k\,\dot x^k \ .
    \end{equation}
    Here the affine connection $\widetilde{\Gamma}^i_{jk}=\Gamma^i_{jk}+S^i_{jk}$ is given by the sum of the Levi--Civita connection
    \begin{equation*}
     \Gamma_{jk}^i=\tfrac12\, g^{il}\,\big(\partial_j g_{lk}+\partial_k g_{lj}-\partial_l g_{jk}\big)
    \end{equation*}
    for the effective metric and the tensor 
    \begin{equation*}
        S^i_{jk}=\tfrac12\,g^{il}\, \widecheck v{}^p_l\,\big(\widecheck v{}^q_k\,g_{jr} + \widecheck v{}^q_j\, g_{kr}\big)\, \pi^{rs}\,\partial_s \widehat F_{pq} \ .   
        \end{equation*}
Notice that  $\widehat F_{pq}$ is a collection of scalar functions and so applying a partial derivative to it is a legitimate tensor operation.  

The appearance of the induced metric and connection is a known phenomenon in noncommutative gauge theory called {\it emergent gravity}, see e.g.~\cite{Langmann:2001yr,rivelles2003noncommutative,Yang:2004vd,Szabo:2006wx,steinacker2010}. Note that the effective connection $\widetilde \Gamma$ is not a metric connection unless the electromagnetic field is homogeneous.  The right-hand side of (\ref{GS}) is determined by the effective  Lorentz force, which is linear in velocity. 

\subsection{Linear Poisson brackets}\label{LPBr}

Any linear Poisson bracket gives rise to the symplectic groupoid of Example \ref{TG}. 
By using the trivialization  (\ref{triv}), we can identify the bisections of $T^\ast G$ with the graphs $\Sigma^A =\big(A(v), v\big)\in G\times \mathfrak{g}^\ast$ of functions $A:\mathfrak{g}^*\to G$. 
Then the right translation by $\Sigma^A$ is defined by the formula
\begin{equation*}
    \big(g, \mathsf{Ad}_g^\ast(v)\big)\,\Sigma^A= \big(g, \mathsf{Ad}_g^\ast(v)\big)\,\big(A(v),v\big)= \big(g\,A(v), \mathsf{Ad}^\ast_{g\,A(v)}\,\mathsf{Ad}^\ast_{A(v)^{-1}}(v)\big)\ .
    \end{equation*}
It is convenient to make the change of the variables $(g,v)=\big(g,\mathsf{Ad}^\ast_g(u)\big)$. In the new variables $(g,u)$ the symplectic potential assumes the form
\begin{equation}\label{tug}
    \theta=\langle u, g^{-1}\,\dd g\rangle\ ,
\end{equation}
while the action of right translations can be written as 
\begin{equation}\label{guh}
    (g,u)\,\Sigma^A=\big(g\,A(u), \mathsf{Ad}^\ast_{A(u)^{-1}}(u)\big)\ .
\end{equation}

The function $A:\mathfrak{g}^\ast\rightarrow G$ is identified with a gauge potential. On substituting (\ref{guh}) into (\ref{tug}) one can see that the bisection $\Sigma^A=\big(A(u),u\big)$ is Lagrangian if and only if 
\begin{equation*}
F^\sfs   = \dd\langle u, \dd A \, A^{-1}\rangle =0\ .
\end{equation*}
The gauge potentials form a non-abelian group isomorphic to the group of bisections $\mathscr{B}(T^\ast G)$. If $\Sigma^B=\big(B(v), \mathsf{Ad}^\ast_{B(v)}(v)\big)\in G\times \mathfrak{g}^\ast$ is another bisection, then
\begin{equation}\label{fh}
    \Sigma^B \cdot \Sigma^A=\big(B(v), \mathsf{Ad}^\ast_{B(v)}(v)\big)\,\big(A(v),v\big) =\big(B(v)\,A(v), \mathsf{Ad}^\ast_{B(v)}(v)\big)\ .
    \end{equation}
    
Writing the product in the form adapted to the $\sfs$-projection, 
$$\Sigma^B\cdot \Sigma^A=:\Sigma^{{A}'}=\big({A}'(v), v\big)\ ,$$
 we find  
    \begin{equation*}
        {A}'(v)=B(v')\,A( v')\ ,
    \end{equation*}
    where the function $v'=v'(v)$ is determined by the equation \smash{$v=\mathsf{Ad}^\ast_{B(v')}(v')$}.
    In the case that the bisection \smash{$\Sigma^B$} is close to the submanifold of units $e\times \mathfrak{g}^\ast$, or equivalently the image $\mathrm{im}(B)$ is contained in a small neighbourhood of the identity $e\in G$, one can solve this last equation approximately as $v'\approx \mathsf{Ad}^\ast_{B( v)^{-1}}(v)$. This gives an approximate expression for the potential 
    \begin{equation}\label{gth}
        A'(v)\approx B(v)\,A\big(\mathsf{Ad}^\ast_{A(v)^{-1}}(v)\big)\ .
    \end{equation}
    For a Lagrangian bisection $\Sigma^B$, this formula yields an infinitesimal gauge transformation of the gauge potential $A(v)$. 
    
The minimal coupling to the electromagnetic field is now described by the action functional
\begin{equation*}
    S=\int_\gamma\, \langle u, g^{-1}\,\dd g\rangle-\lambda\, H\big(g\,A(u), \mathsf{Ad}^\ast_{A(u)^{-1}}(u)\big)\ ,
    \end{equation*}
where $H(g,u)\in C^{\infty}(T^\ast G)$ is the Hamiltonian of a `free' particle. By construction, the action functional is invariant under the infinitesimal gauge transformations of the gauge potential (\ref{gth}) followed by the infinitesimal gauge transformations of the phase space variables 
\begin{equation*}
    (g',u')=(g,u)\,\big(\Sigma^B\big)^{-1}=\big(g\,B(u)^{-1}, \mathsf{Ad}^\ast_{B(u)}(u)\big)\ .
\end{equation*}
Here we used (\ref{guh}) along with $\big(\Sigma^B\big)^{-1}\approx \big(B(v)^{-1},v\big)$. 

\paragraph{$\boldsymbol{SU(2)}$ as momentum space.} 

Let us exemplify this construction  by the linear Poisson structure associated with the simplest non-abelian Lie algebra $\mathfrak{su}(2)$. The corresponding Poisson brackets read
\begin{equation*}
    \{x,y\}=z\quad,\quad  \{y,z\}=x\quad,\quad  \{z,x\}=y\ ,
    \end{equation*}
where $(x,y,z)$ is a vector of $\mathfrak{su}(2)^\ast\simeq  \mathbb{R}^3$. To make all constructions as explicit and simple as possible,  we will use the language of quaternions, see \cite[Ex.~14.22]{CFM}. Let $\mathbf{i}$, $\mathbf{j}$ and $\mathbf{k}$ be the imaginary unit quaternions satisfying
\begin{align*}
    \mathbf{i}^2=\mathbf{j}^2=\mathbf{k}^2=\mathbf{i}\,\mathbf{j}\,\mathbf{k}=-1 \ .
\end{align*}
We identify the group $SU(2)$ with the quaternions of norm one:
\begin{equation}\label{uvst}
    p=u+v\,\mathbf{i}+s\,\mathbf{j}+t\,\mathbf{k} \qquad \text{with} \quad u^2+v^2+s^2+t^2=1\ .
\end{equation}
These form the three-sphere $S^3$. The space $\mathfrak{su}(2)^\ast$ is then  identified with the purely imaginary quaternions via
\begin{equation}\label{q}
    q=x\,\mathbf{i}+y\,\mathbf{j}+z\,\mathbf{k}\ .
    \end{equation}
    
We can now define the Lie groupoid  $S^3\times \mathbb{R}^3\rightrightarrows \mathbb{R}^3$ with source and target maps (cf. (\ref{lpb}))
\begin{equation*}
    \sfs(p,q)=q\qquad\text{and}\qquad \sft(p,q)=p\, q\, \bar p\ ,
\end{equation*}
where $\bar p=u-v\,\mathbf{i}-s\,\mathbf{j}-t\,\mathbf{k}$ is the quaternion conjugate to (\ref{uvst}); by definition $p\,\bar p=1$. The product of two composable elements is given by 
\begin{equation*}
    (p,q)\,(p', p\,q\,\bar p)=(p\,p', q)\ .
\end{equation*}
The multiplicative symplectic form $\omega=\dd\theta$ is determined by the Liouville one-form 
\begin{align}\label{spt}
\begin{split}
\theta=-\tfrac12\,\mathrm{Tr}(q\, \bar p\, \dd p)&=-(x\,v+y\,s+z\,t)\,\dd u+(x\,u-y\,t+z\,s)\,\dd v \\
& \qquad +(x\,t+y\,u-z\,v)\,\dd s+(-x\,s+y\,v+z\,u)\,\dd t\ ,
\end{split}
\end{align}
where $\mathrm{Tr}(p)=p+\bar p=2\,u$ is the trace on the quaternion algebra.  

Each bisection  $\Sigma^A=\big(A(q), q\big)$ is determined  by a smooth map $A: \mathbb{R}^3\rightarrow S^3$ that assigns to every purely imaginary quaternion (\ref{q}) a quaternion 
\begin{equation}\label{AAAA}
A=A_0+A_1\,\mathbf{i}+A_2\,\mathbf{j}+A_3\,\mathbf{k}
\end{equation}
of norm one, $A\,\bar A=1$. The right translation by $\Sigma^A$ is given by
\begin{equation*}
   (p,q)\,\Sigma^A= (p,q)\,\big(A(p\,q\,\bar p), p\,q\,\bar p\big)=\big(p\,A(p\,q\,\bar p), q\big)\ .
\end{equation*}
Physically this formula defines gauge invariant coordinates on the phase space $S^3 \times \mathbb{R}^3$. The formula (\ref{fh}) for the composition of two bisections takes the form
\begin{equation}\label{comp1}
    \Sigma^B\cdot\Sigma^A=\big(B(q),  q \big)\,\big(A(B\,q\,\bar B),B\, q\,\bar B\big)=\big(B(q) \, A(B\,q\,\bar B), q\big)\ .
    \end{equation}
    
Consider a bisection $\Sigma^B$ that is infinitesimally close to the base manifold $\mathbb{R}^3$. This is defined by a map $B(q)=1+b(q)$, where  $b(q)$ is a small purely imaginary  quaternion. The bisection $\Sigma^B$ is Lagrangian if 
\begin{equation*}
    F^\sfs=-\tfrac12\,\mathrm{Tr}(\dd q\wedge \dd b)=0\ .
\end{equation*}
This last condition implies that 
\begin{equation*}
    b=\partial_x\varepsilon\, \mathbf{i}+\partial_y\varepsilon\, \mathbf{j}+\partial_z\varepsilon\, \mathbf{k}
    \end{equation*}
for some small scalar function $\varepsilon(x,y,z)$. 
On substituting $B=1+b$ in (\ref{comp1}), we get the expression for infinitesimal left translations on the group of bisections given by
\begin{equation}\label{gtA}
\delta_bA=b\,A+A(q+b\,q-q\,b)-A(q) \ .
\end{equation}

Let us treat the symplectic manifold  $(S^3\times\mathbb{R}^3, \omega )$ 
as the phase space  of a {\it non-relativistic} scalar particle moving in $\mathbb{R}^3$. Then the bisections (\ref{AAAA}) are naturally interpreted as the $3$-vector potentials for a purely magnetic field. The potential $A$ enjoys the infinitesimal gauge symmetry transformations (\ref{gtA}).
At this point, it is convenient to change variables $(p,q)\rightarrow (p, r=p\, q\, \bar p)$. In terms of these new variables, the symplectic potential (\ref{spt}) assumes the form $\theta =-\frac12\,\mathrm{Tr}(r\,\dd p\,\bar p)$ and the dynamics of a scalar particle minimally coupled to the external magnetic field is governed by the gauge invariant Lagrangian
\begin{equation}\label{Lsu2}
    L=-\tfrac12\,\mathrm{Tr}(r\,\dot p\, \bar p)-\lambda\,H\big(p\,A(r),\bar p\, r\,p\big)\ .
\end{equation}
Here $H(p,r)$ is the Hamiltonian of the free particle and the overdot stands for the time derivative. One should also remember that the `position' and `momentum' quaternions in (\ref{Lsu2}) are subject to the constraints  $r+\bar r=0$ and $p\,\bar p=1$; hence here the momentum space of a non-relativistic particle moving in $\mathbb{R}^3$
is compact and diffeomorphic to $S^3$. 

\paragraph{Poisson brackets of $\boldsymbol\kappa$-Minkowski type.} 

For another concrete example where everything can be computed explicitly, consider the linear Poisson structure on $\mathbb{R}^n$ of $\kappa$-Minkowski type, whose integrating symplectic groupoid was detailed in~\cite{Saemann:2012ab}. This is defined by the Poisson brackets
\begin{equation*}
    \{x^0,x^i\}=\kappa^{-1}\, x^i\qquad\text{and}\qquad \{x^i,x^j\}=0\ ,
\end{equation*}
where $i\in\{1,\dots,n-1\}$ and $\kappa$ is a constant. The integrating Lie group (momentum space) $G$ in this case is non-compact; for $n=2$ it is isomorphic to the connected affine group of $\mathbb{R}$, also known as the ``$a\,x+b$ group''.

The source map is given by
\begin{equation*}
    \sfs^0(p,q)=q^0-\kappa^{-1}\,q^i\,p_i=x^0\qquad\text{and}\qquad  \sfs^i(p,q)= q^i=x^i\ ,
\end{equation*}
while the target map is
\begin{equation*}
    \sft^0(p,q)=q^0\qquad\text{and}\qquad  \sft^i(p,q)=\ee^{-\kappa^{-1}\,p_0}\, q^i\ ,
\end{equation*}
where $\{q^\mu,p_\nu\}=\delta^\mu_\nu$ with $\mu, \nu \in\{0,1,\dots,n-1\}$.
Two elements $(p,q)$ and $(p^\prime,q^\prime)$ are composable if $\sft(p,q)=\sfs(p^\prime,q^\prime)$, or equivalently
\begin{equation*}
    (q^\prime)^0=q^0+\kappa^{-1}\,\ee^{-\kappa^{-1}\,p_0}\,q^i\,p^\prime_i\qquad\text{and}\qquad  (q^\prime)^i=\ee^{-\kappa^{-1}\,p_0}\, q^i\ .
\end{equation*}
The product reads
\begin{align*}
&\big((p_0, p_i)\,,\,(q^0, q^i)\big)\,\big((p^\prime_0, p_i^\prime)\,,\,(q^0+\kappa^{-1}\,\ee^{-\kappa^{-1}\,p_0}\,q^i\,p^\prime_i,\ee^{-\kappa^{-1}\,p_0}\, q^i)\big)\\[4pt]
   & \hspace{4cm} =\big((p_0+p^\prime_0,  p_i+\ee^{-\kappa^{-1}\,p_0}\, p^\prime_i)\,,\,(q^0+\kappa^{-1}\,\ee^{-\kappa^{-1}\,p_0}\,q^i\,p^\prime_i, q^i)\big) \ .
\end{align*}

Let us write the gauge invariant phase space coordinates. For this, we consider a bisection $\Sigma^A=\big(A(\widecheck q),\widecheck q\big)$ and apply the right translation
\begin{equation*}
    (p,q)\,\big(A(\widecheck q),\widecheck q\big)=:(P,Q)\ .
\end{equation*}
Composability implies 
\begin{eqnarray*}
    (\widecheck q)^0=q^0+\kappa^{-1}\,\ee^{-\kappa^{-1}\,p_0}\,q^i\,A_i(\widecheck q)\qquad\text{and}\qquad (\widecheck q)^i=\ee^{-\kappa^{-1}\,p_0}\,q^i \ .
\end{eqnarray*}
Then using the product one finds
\begin{eqnarray*}
Q^0=(\widecheck q)^0\quad,\quad Q^i=q^i\qquad\text{and}\qquad P_0=p_0+A_0(\widecheck q)\quad,\quad P_i=p_i+\ee^{-\kappa^{-1}\,p_0}\,A_i(\widecheck q)\ .
\end{eqnarray*}
The minimal coupling of the particle to the electromagnetic field is thus described by the Lagrangian
\begin{align*}
    L = p_i\,\dot q^i - \lambda\,H(P,Q) \ .
\end{align*}

\subsection{A class of quadratic Poisson brackets}

As our final example, we consider the log-canonical Poisson brackets on $\mathbb{R}^n$, see \cite{li2020symplectic} and \cite[Ex.~14.23]{CFM}, which are given by 
\begin{equation*}
    \{x^i,x^j\}=\alpha^{ij}\,x^i\,x^j \qquad \mbox{(no sum over $i,j$)}\, .
\end{equation*}
Here $\alpha^{ij}=-\alpha^{ji}$ are constants. The integrating symplectic groupoid is given by the space $\mathbb{R}^{2n}=\mathbb{R}^n\times \mathbb{R}^n$ with coordinates $(x^1,\ldots, x^n,p^1,\ldots,p^n)$ and the symplectic structure
\begin{equation}\label{xx}
    \omega=\sum_{i=1}^n\, \dd x^i\wedge \dd p^i-\frac12\,\sum_{i,j=1}^n\,\alpha^{ij}\,\dd(x^i\,p^i)\wedge \dd(x^j\, p^j)\ .
\end{equation}
The corresponding Poisson brackets read
\begin{equation*}
    \{x^i,x^j\}=\alpha^{ij}\,x^i\,x^j\quad,\quad \{p^i,p^j\}=\alpha^{ij}\, p^i\,p^j\quad,\quad \{x^i, p^j\}=\delta^{ij} -\alpha^{ij}\,x^i\,p^j\ .
\end{equation*}
Unlike the symplectic realization (\ref{se}), the symplectic structure (\ref{xx}) yields non-zero Poisson brackets among the momentum coordinates $p$.

The source and target maps are defined by 
\begin{equation*}
    \sfs(x^i,p^i)=(x^i)\qquad\text{and}\qquad \sft(x^i,p^i)=\Big(x^i\,\ee^{\,\sum\limits_{j=1}^n \, \alpha^{ij}\,x^j\,p^j}\Big)\ .
\end{equation*}
The product of composable elements has the form
\begin{equation}\label{xp}
    (x^i,p^i)\,( \widecheck x{}^i,\widecheck p{}^i)=\Big(x^i, p^i+\widecheck p{}^i\,\ee^{\,\sum\limits_{j=1}^n\alpha^{ij}\,x^j\,p^j}\Big)\ .
\end{equation}
Composability implies
\begin{align}\label{barx}
\widecheck x{}^i=x^i\,\ee^{\,\sum\limits_{j=1}^n\, \alpha^{ij}\,x^j\,p^j}\ .
\end{align}
It is straightforward to check the brackets 
\begin{equation}\label{xxp}
    \{\widecheck x{}^i, x^j\}=0\qquad\text{and}\qquad \{\widecheck x{}^i, p^j\}=\delta^{ij}\,\ee^{\,\sum\limits_{k=1}^n\,\alpha^{ik}\,x^k\,p^k}\ .
\end{equation}
In particular, the map $\sfs$ is Poisson by definition, while $\sft$ is anti-Poisson:
\begin{equation*}
 \{\sfs^i,\sfs^j\}=\alpha^{ij}\,\sfs^i\,\sfs^j \qquad\text{and}\qquad  \{\sft^i,\sft^j\}=-\alpha^{ij}\,\sft^i\,\sft^j\ .
\end{equation*}

Let  $\Sigma^A$ be the bisection defined by a graph $\big(\widecheck x, A(\widecheck x)\big)\in \mathbb{R}^n\times \mathbb{R}^n$. Then the right translation by $\Sigma^A$ gives
\begin{equation*}
    (x^i,p^i)\,\Sigma^A=(x^i, P^i)\ ,
\end{equation*}
where 
\begin{equation}\label{gip}
P^i=p^i+A^i(\widecheck x)\, \ee^{\,\sum\limits_{j=1}^n\, \alpha^{ij}\,x^j\,p^j} \ ,
\end{equation}
and the argument $\widecheck x$ is determined from the composability condition as in \eqref{barx}.
Notice that the right translations preserve each $\sfs$-fiber, that is, points move along $\sfs$-fibers under the right action of bisections. 
As a result, the gauge invariant positions coincide with the original coordinates $x$. Only the momenta are modified according to (\ref{gip}).

The covariant field strength tensor $F^\sfs=\frac12\, F_{lk}^\sfs\, \dd x^l\wedge \dd x^k$ has components
\begin{equation}\label{Fxx}
    F^\sfs_{lk}=\partial_k A^l-\partial_l A^k -\sum_{i,j=1}^n\, \alpha^{ij}\,\big(A^i\,\delta^i_l+x^i\,\partial_lA^i\big)\,\big(A^j\,\delta^j_k+x^j\,\partial_kA^j\big)\ .
\end{equation}
Let $\Sigma^B=\big(x, B(x)\big)$ for some infinitesimal functions $B^i(x)$. It follows from (\ref{Fxx}) that $\Sigma^B$ is a Lagrangian bisection if $\partial_i B^j-\partial_j B^i=0$. 
Using (\ref{xp}) one can compute 
\begin{equation*}
    \Sigma^B\cdot \Sigma^A = \Big(x^i, B^i(x) +A^i(\widecheck x_B)\, \ee^{\,\sum\limits_{j=1}^n\,\alpha^{ij}\,x^j\,B^j(x)}\Big )\ ,
\end{equation*}
where
\begin{equation*}
  \widecheck x{}^i_B=x^i\, \ee^{\,\sum\limits_{j=1}^n\, \alpha^{ij}\,x^j\,B^j(x)}  \ .
\end{equation*}
Taking $B^i(x)=\partial_i\varepsilon (x)$, we obtain the infinitesimal gauge transformation 
\begin{equation}\label{Axx}
    \delta_{\varepsilon} A^i=\sum_{j=1}^n\,\big (\delta^{ij}- x^j\,\alpha^{ji}\, A^{i}(x)\big)\,\partial_j\varepsilon  +\{A^i,\varepsilon\}\ .
\end{equation}

The minimal coupling of the particle to the external electromagnetic field is now described by the Lagrangian 
\begin{equation}\label{Lxx}
    L=\sum_{i=1}^n\, p^i\,\dot x^i+\sum_{i,j=1}^n\, x^i\,p^i\,\alpha^{ij}\,\big(\dot x^j\,p^j+x^j\,\dot p^j\big)-\lambda\, H(x,P)\ ,
\end{equation}
where the gauge invariant momenta $P^i$ are defined by (\ref{gip}) and (\ref{barx}).  By construction, this Lagrangian is invariant (up to total derivative) under the combined gauge transformations of the electromagnetic field (\ref{Axx}) and the canonical transformations of the phase space variables given by
\begin{equation*}
    \delta_{\varepsilon} x^i=\{x^i,\varepsilon (\widecheck x)\}=0\qquad\text{and}\qquad 
   \delta_\varepsilon p^i=\{p^i,\varepsilon(\widecheck x)\}= -\partial_i\varepsilon(\widecheck x)\,\ee^{\,\sum\limits_{k=1}^n\,\alpha^{ik}\,p^k\,x^k}\ .
\end{equation*}
Here we used (\ref{xxp}). Of course, by changing the dynamical variables as $(x,p)\rightarrow (\widecheck x, p)$ one can bring 
the Lagrangian (\ref{Lxx}) into a form where the electromagnetic potentials $A(\widecheck x)$ depend on the `position coordinates' $\widecheck x$. Contrary to $x$, however, the new position coordinates $\widecheck x$ are not gauge invariant when $\alpha^{ij}\neq 0$. 

\section{Conclusions}
\label{sec:concl}

\paragraph{Summary of results.} 

In this paper we have interpreted the classical phase space of a point particle on a noncommutative spacetime, regarded as the quantization of a Poisson manifold, as the corresponding integrating symplectic groupoid. This provides a conceptual geometric explanation for the structure of the gauge sector of Poisson electrodynamics through the role of bisections of symplectic groupoids. We used this description to explicitly realise the gauge invariant minimal coupling of a dynamical charged particle to a background gauge field in Poisson electrodynamics. We constructed both gauge covariant as well as gauge invariant field strength tensors which are related to one another through the actions of bisections, and briefly addressed the problem of formulating gauge invariant action functionals for the electromagnetic field. 

The geometric realisations of these physical quantities in terms of symplectic groupoids are concisely summarised in Table~\ref{tab:glossary}. From a conceptual point of view, our constructions provide a precise mathematical meaning to earlier proposals of the role of curved momentum spaces in noncommutative geometry. As one particular phenomenon anticipated from early studies of noncommutative gauge theories, we explicitly demonstrated the interplay between gauge transformations and spacetime diffeomorphisms, as well as the related appearance of emergent gravity phenomena in Poisson electrodynamics.

\paragraph{Further extensions and applications.} 

A tantalizing perspective of our approach stems from the fact that a Poisson manifold can be quantized by a twisted polarized convolution $C^*$-algebra of a symplectic groupoid, extending standard geometric quantization of symplectic manifolds; see e.g.~\cite{Hawkins2008,Saemann:2012ab} and references therein. It would be interesting to exploit this quantization to produce a \emph{complete} noncommutative gauge theory for the first time, beyond the semi-classical sector captured by Poisson electrodynamics. Among other things, this approach has the prospect of elucidating the geometric origin and algebraic structure of full noncommutative gauge symmetries.

Another interesting avenue would be to generalise our geometric perspective to the case of \emph{twisted} Poisson manifolds, which are integrated by a twisted version of symplectic groupoids~\cite{Cattaneo:2003fs}. Twisted Poisson brackets are quantized by \emph{nonassociative} star-products and arise in the dynamics of electric charges in smooth magnetic monopole backgrounds as well as in string theory on certain flux compactifications; see~\cite{Szabo:2019hhg} for a review and references therein. The symplectic embeddings in these instances are described in~\cite{Kupriyanov:2018xji,Kupriyanov:2018yaj,Kupriyanov_2021} and the gauge sector of twisted Poisson electrodynamics in~\cite{Kupriyanov_2021}. It would be interesting to understand how these constructions are explained geometrically in terms of the twisted symplectic groupoids of~\cite{Cattaneo:2003fs}.

Our proposed construction of minimal interaction allows, {\it in principle},  experimental verification of the hypothesis of spacetime noncommutativity. 
For example, one can test the deviation from Coulomb's law for a pair of charged particles due to one or another form of noncommutativity. 
Upon quantization such a deviation should inevitably manifest itself in atomic spectra; even if the corresponding measurements are beyond the reach of current experiments. Another option is to study the magnetic susceptibility of an electron gas within the framework of our proposed model of minimal interaction. As is well known, a non-zero magnetic susceptibility is a purely quantum effect due to the noncommutativity of position coordinates and momenta (Landau diamagnetism).   The noncommutativity of position coordinates themselves should contribute to the overall susceptibility of the electron gas. This provides another possible test for noncommutativity.

Our results may also have implications for the hypothesis of Lorentz-invariance violation, which has been a topic of intensive investigation in both theoretical and experimental high energy physics, see e.g.~\cite{colladay1998lorentz,coleman1999high,kostelecky2004gravity, liberati2013tests,abreu2022testing,Sarker2023InvestigatingTE, Finke_2023}. At low energies, one can describe hypothetical Lorentz violating effects  by an effective field theory called a `standard model extension'. The idea is to include into the standard model Lagrangian all possible terms that, while violating Lorentz and CPT invariance, do respect the standard gauge invariance and renormalization properties; in so doing, various tensor coefficients responsible for breaking Lorentz invariance are interpreted as expectation values in some fundamental (yet unknown) theory. The case of the pure electromagnetic field was considered in~\cite{kostelecky2009electrodynamics}. Poisson electrodynamics offers a refined interpretation of the Lorentz-invariance violating terms -- all of them arise from a background Poisson bivector. Furthermore, unlike the various standard model extensions, the violation of the Lorentz symmetry is accompanied here with a non-trivial deformation of the gauge group. All this makes Poisson electrodynamics quite rigid and therefore more specific with respect to the predictions of Lorentz violation.

Poisson electrodynamics features a remarkable duality between the noncommutativity of spacetime and the curvature of momentum space.
As discussed in \S \ref{LPBr}, the simplest $\mathfrak{su}(2)$-type Poisson brackets leads to a model of a non-relativistic particle whose momenta form a three-sphere $S^3$, rather than $\mathbb{R}^3$. The radius of the momentum sphere turns out to be the reciprocal to the uncertainty in measurements of spacetime coordinates. The compactness of the momentum space has numerous far-reaching physical consequences: from the existence of a natural ultraviolet cut-off in quantum field theory and negative-temperature states in classical thermodynamics to deviations in Planck's radiation law and caloric properties of gases at very high temperatures \cite{Born}. 
More recent proposals for experimental tests of the curvature of momentum space can be found in \cite{Amelino}. 
It should be noted that all previous works on curved momentum space treated the curvature exclusively in terms of Riemannian or affine geometry. Our analysis clearly demonstrates that the true geometry behind curved momentum spaces is that of symplectic groupoids.

\section*{Acknowledgements} 

We would like to acknowledge an enlightening correspondence with Alan Weinstein. This article is based upon work from COST Action CaLISTA CA21109 supported by COST (European Cooperation in Science and Technology). R.J.S. thanks the Centro de Matem\'atica, Computa\c{c}\~ao e Cogni\c{c}\~ao of the Universidade de Federal do ABC for hospitality and support during the initial stages of this work. V.G.K. acknowledges support from the CNPq Grant 304130/2021-4. The work of V.G.K. and {R.J.S.} was supported in part by the FAPESP Grant 2021/09313-8. The work of A.A.S. was partially supported by the FAPESP Grant 2022/13596-8 and by the Foundation for the Advancement of Theoretical Physics and Mathematics ``BASIS''. The results of \S 5.3 were obtained under the exclusive support of the Ministry of Science and Higher Education of the Russian Federation (project No. FSWM-2020-0033).


\begin{thebibliography}{10}

\bibitem{Born}
M.~Born, ``{A suggestion for unifying quantum theory and relativity},''
  \href{http://dx.doi.org/10.1098/rspa.1938.0060}{{\em Proc. Roy. Soc. London
  A} {\bfseries 165} no.~921, (1938) 291--303}.

\bibitem{born1949reciprocity}
M.~Born, ``{Reciprocity Theory of Elementary Particles},''
  \href{http://dx.doi.org/10.1103/revmodphys.21.463}{{\em Rev. Mod. Phys.}
  {\bfseries 21} no.~3, (1949) 463--473}.

\bibitem{Golfand:1962kjf}
Y.~A. Golfand, ``{Quantum field theory in constant curvature p-space},'' {\em
  Zh. Eksp. Teor. Fiz.} {\bfseries 43} no.~1, (1962) 256--267.

\bibitem{Tamm}
I.~E. Tamm, ``{On curved momentum space},'' in {\em {Proceedings of the
  International Conference on Elementary Particles}}, p.~314.
\newblock Kyoto, Kyoto University, 1965.

\bibitem{Kadyshevsky:1977mu}
V.~G. Kadyshevsky, M.~D. Mateev, R.~M. Mir-Kasimov, and I.~P. Volobuev,
  ``{Equations of Motion for the Scalar and the Spinor Fields in
  Four-Dimensional Noneuclidean Momentum Space},''
  \href{http://dx.doi.org/10.1007/BF01032066}{{\em Theor. Math. Phys.}
  {\bfseries 40} (1979) 800--807}.

\bibitem{KOWALSKI_GLIKMAN_2013}
J.~Kowalski-Glikman, ``{Living in Curved Momentum Space},''
  \href{http://dx.doi.org/10.1142/S0217751X13300147}{{\em Int. J. Mod. Phys. A}
  {\bfseries 28} (2013) 1330014},
  \href{http://arxiv.org/abs/1303.0195}{{\ttfamily arXiv:1303.0195 [hep-th]}}.

\bibitem{Franchino-Vinas:2023rcc}
S.~Franchino-Vinas, S.~Mignemi, and J.~J. Relancio, ``{The beauty of curved
  momentum space},'' \href{http://dx.doi.org/10.22323/1.436.0340}{{\em Proc.
  Sci.} {\bfseries 436} (2023) 340},
  \href{http://arxiv.org/abs/2303.08220}{{\ttfamily arXiv:2303.08220
  [hep-th]}}.

\bibitem{Amelino}
G.~Amelino-Camelia, L.~Freidel, J.~Kowalski-Glikman, and L.~Smolin, ``{The
  principle of relative locality},''
  \href{http://dx.doi.org/10.1103/PhysRevD.84.084010}{{\em Phys. Rev. D}
  {\bfseries 84} (2011) 084010},
  \href{http://arxiv.org/abs/1101.0931}{{\ttfamily arXiv:1101.0931 [hep-th]}}.

\bibitem{MBron}
M.~Bronstein, ``{Quantum theory of weak gravitational fields},''
  \href{http://dx.doi.org/10.1007/s10714-011-1285-4}{{\em Gen. Rel. Grav.}
  {\bfseries 44} (2012) 267--283}.

\bibitem{Snyder1}
H.~S. Snyder, ``{Quantized space-time},''
  \href{http://dx.doi.org/10.1103/PhysRev.71.38}{{\em Phys. Rev.} {\bfseries
  71} (1947) 38--41}.

\bibitem{Snyder2}
H.~S. Snyder, ``{The Electromagnetic Field in Quantized Space-Time},''
  \href{http://dx.doi.org/10.1103/PhysRev.72.68}{{\em Phys. Rev.} {\bfseries
  72} (1947) 68--71}.

\bibitem{Buoninfante_2021}
L.~Buoninfante, ``{Maximal acceleration, reciprocity \& nonlocality},''
  \href{http://dx.doi.org/10.1142/S0218271821420128}{{\em Int. J. Mod. Phys. D}
  {\bfseries 30} no.~14, (2021) 2142012},
  \href{http://arxiv.org/abs/2105.08167}{{\ttfamily arXiv:2105.08167
  [hep-th]}}.

\bibitem{Freidel_2006}
L.~Freidel and E.~R. Livine, ``{3D Quantum Gravity and Effective Noncommutative
  Quantum Field Theory},''
  \href{http://dx.doi.org/10.1103/PhysRevLett.96.221301}{{\em Phys. Rev. Lett.}
  {\bfseries 96} (2006) 221301},
  \href{http://arxiv.org/abs/hep-th/0512113}{{\ttfamily arXiv:hep-th/0512113}}.

\bibitem{Lust:2017bgx}
D.~L{\"u}st, E.~Malek, and R.~J. Szabo, ``{Non-geometric Kaluza-Klein monopoles
  and magnetic duals of M-theory $R$-flux backgrounds},''
  \href{http://dx.doi.org/10.1007/JHEP10(2017)144}{{\em JHEP} {\bfseries 10}
  (2017) 144}, \href{http://arxiv.org/abs/1705.09639}{{\ttfamily
  arXiv:1705.09639 [hep-th]}}.

\bibitem{Seiberg_1999}
N.~Seiberg and E.~Witten, ``{String theory and noncommutative geometry},''
  \href{http://dx.doi.org/10.1088/1126-6708/1999/09/032}{{\em JHEP} {\bfseries
  09} (1999) 032}, \href{http://arxiv.org/abs/hep-th/9908142}{{\ttfamily
  arXiv:hep-th/9908142}}.

\bibitem{Douglas:2001ba}
M.~R. Douglas and N.~A. Nekrasov, ``{Noncommutative field theory},''
  \href{http://dx.doi.org/10.1103/RevModPhys.73.977}{{\em Rev. Mod. Phys.}
  {\bfseries 73} (2001) 977--1029},
  \href{http://arxiv.org/abs/hep-th/0106048}{{\ttfamily arXiv:hep-th/0106048}}.

\bibitem{Szabo:2001kg}
R.~J. Szabo, ``{Quantum field theory on noncommutative spaces},''
  \href{http://dx.doi.org/10.1016/S0370-1573(03)00059-0}{{\em Phys. Rept.}
  {\bfseries 378} (2003) 207--299},
  \href{http://arxiv.org/abs/hep-th/0109162}{{\ttfamily arXiv:hep-th/0109162}}.

\bibitem{Kupriyanov:2018xji}
V.~G. Kupriyanov and R.~J. Szabo, ``{Symplectic realization of electric charge
  in fields of monopole distributions},''
  \href{http://dx.doi.org/10.1103/PhysRevD.98.045005}{{\em Phys. Rev. D}
  {\bfseries 98} no.~4, (2018) 045005},
  \href{http://arxiv.org/abs/1803.00405}{{\ttfamily arXiv:1803.00405
  [hep-th]}}.

\bibitem{Kupriyanov:2018yaj}
V.~G. Kupriyanov, ``{Recurrence relations for symplectic realization of
  (quasi)-Poisson structures},''
  \href{http://dx.doi.org/10.1088/1751-8121/ab10e8}{{\em J. Phys. A} {\bfseries
  52} no.~22, (2019) 225204}, \href{http://arxiv.org/abs/1805.12040}{{\ttfamily
  arXiv:1805.12040 [math-ph]}}.

\bibitem{Kupriyanov_2021}
V.~G. Kupriyanov and R.~J. Szabo, ``{Symplectic embeddings, homotopy algebras
  and almost Poisson gauge symmetry},''
  \href{http://dx.doi.org/10.1088/1751-8121/ac411c}{{\em J. Phys. A} {\bfseries
  55} no.~3, (2022) 035201}, \href{http://arxiv.org/abs/2101.12618}{{\ttfamily
  arXiv:2101.12618 [hep-th]}}.

\bibitem{KM}
M.~V. Karasev and V.~P. Maslov, {\em Nonlinear Poisson Brackets: Geometry and
  Quantization}.
\newblock Translations of Mathematical Monographs, Vol. 119. American
  Mathematical Society, 1993.

\bibitem{Gross:2000ba}
D.~J. Gross, A.~Hashimoto, and N.~Itzhaki, ``{Observables of noncommutative
  gauge theories},'' \href{http://dx.doi.org/10.4310/ATMP.2000.v4.n4.a4}{{\em
  Adv. Theor. Math. Phys.} {\bfseries 4} (2000) 893--928},
  \href{http://arxiv.org/abs/hep-th/0008075}{{\ttfamily arXiv:hep-th/0008075}}.

\bibitem{Lizzi:2001nd}
F.~Lizzi, R.~J. Szabo, and A.~Zampini, ``{Geometry of the gauge algebra in
  noncommutative Yang--Mills theory},''
  \href{http://dx.doi.org/10.1088/1126-6708/2001/08/032}{{\em JHEP} {\bfseries
  08} (2001) 032}, \href{http://arxiv.org/abs/hep-th/0107115}{{\ttfamily
  arXiv:hep-th/0107115}}.

\bibitem{Kupriyanov:2021aet}
V.~G. Kupriyanov, ``{Poisson gauge theory},''
  \href{http://dx.doi.org/10.1007/JHEP09(2021)016}{{\em JHEP} {\bfseries 09}
  (2021) 016}, \href{http://arxiv.org/abs/2105.14965}{{\ttfamily
  arXiv:2105.14965 [hep-th]}}.

\bibitem{Kurkov:2021kxa}
M.~Kurkov and P.~Vitale, ``{Four-dimensional noncommutative deformations of
  $U(1)$ gauge theory and $L_{\infty}$-bootstrap},''
  \href{http://dx.doi.org/10.1007/JHEP01(2022)032}{{\em JHEP} {\bfseries 01}
  (2022) 032}, \href{http://arxiv.org/abs/2108.04856}{{\ttfamily
  arXiv:2108.04856 [hep-th]}}.

\bibitem{Abla:2022wfz}
O.~Abla, V.~G. Kupriyanov, and M.~A. Kurkov, ``{On the $L_{\infty}$-structure
  of Poisson gauge theory},''
  \href{http://dx.doi.org/10.1088/1751-8121/ac87df}{{\em J. Phys. A} {\bfseries
  55} no.~38, (2022) 384006}, \href{http://arxiv.org/abs/2202.10227}{{\ttfamily
  arXiv:2202.10227 [hep-th]}}.

\bibitem{Kupriyanov:2022ohu}
V.~G. Kupriyanov, M.~A. Kurkov, and P.~Vitale, ``{Poisson gauge models and
  Seiberg--Witten map},'' \href{http://dx.doi.org/10.1007/JHEP11(2022)062}{{\em
  JHEP} {\bfseries 11} (2022) 062},
  \href{http://arxiv.org/abs/2209.13044}{{\ttfamily arXiv:2209.13044
  [hep-th]}}.

\bibitem{Kupriyanov:2023gjj}
V.~G. Kupriyanov, M.~A. Kurkov, and P.~Vitale, ``{Lie-Poisson gauge theories
  and $\kappa$-Minkowski electrodynamics},''
  \href{http://dx.doi.org/10.1007/JHEP11(2023)200}{{\em JHEP} {\bfseries 11}
  (2023) 200}, \href{http://arxiv.org/abs/2304.04857}{{\ttfamily
  arXiv:2304.04857 [hep-th]}}.

\bibitem{Abla:2023odq}
O.~Abla and M.~J. Neves, ``{Effects of wave propagation in canonical Poisson
  gauge theory under an external magnetic field},''
  \href{http://dx.doi.org/10.1209/0295-5075/ad0574}{{\em EPL} {\bfseries 144}
  no.~2, (2023) 24001}, \href{http://arxiv.org/abs/2305.14514}{{\ttfamily
  arXiv:2305.14514 [hep-th]}}.

\bibitem{WdS}
{A. C. da Silva and A. Weinstein}, {\em Geometric Models for Noncommutative
  Algebras}.
\newblock Berkeley Mathematics Lecture Notes, V. 10. American Mathematical
  Society, 1999.

\bibitem{MC}
K.~C.~H. Mackenzie, {\em General Theory of Lie Groupoids and Lie Algebroids}.
\newblock London Mathematical Society Lecture Note Series. Cambridge University
  Press, 2005.

\bibitem{Visman}
{I. Vaisman}, {\em {Lectures on the Geometry of Poisson Manifolds}}, vol.~18 of
  {\em Progress in Mathematics}.
\newblock Birkhäuser, 1~ed., 1994.

\bibitem{CFM}
M.~Crainic, R.~L. Fernandes, and I.~M{\u{a}}rcu{\c{t}}, {\em Lectures on
  Poisson Geometry}, vol.~217 of {\em Graduate Studies in Mathematics}.
\newblock American Mathematical Society, 2021.

\bibitem{weinstein1996groupoids}
A.~Weinstein, ``Groupoids: unifying internal and external symmetry,'' {\em Not.
  Amer. Math. Soc.} {\bfseries 43} no.~7, (1996) 744--752,
  \href{http://arxiv.org/abs/math/9602220}{{\ttfamily arXiv:math/9602220}}.

\bibitem{Rybicki2001OnTG}
T.~Rybicki, ``{On the group of lagrangian bisections of a symplectic
  groupoid},'' \href{http://dx.doi.org/10.4064/bc54-0-13}{{\em Banach Center
  Publ.} {\bfseries 54} (2001) 235--247}.

\bibitem{kosmann2015multiplicativity}
Y.~Kosmann-Schwarzbach, ``{Multiplicativity, from Lie groups to generalized
  geometry},'' \href{http://dx.doi.org/10.4064/bc110-0-10}{{\em Banach Center
  Publ.} {\bfseries 110} (2016) 131--166},
  \href{http://arxiv.org/abs/1511.02491}{{\ttfamily arXiv:1511.02491
  [math.SG]}}.

\bibitem{Souriau}
J.-M. Souriau, {\em Structure of Dynamical Systems: A Symplectic View of
  Physics}, vol.~149 of {\em Progress in Mathematics}.
\newblock Birkhäuser Basel, 1~ed., 1997.

\bibitem{Lyakhovich1996}
S.~L. Lyakhovich, A.~Y. Segal, and A.~A. Sharapov, ``{A universal model of $D =
  4$ spinning particle},''
  \href{http://dx.doi.org/10.1103/PhysRevD.54.5223}{{\em Phys. Rev. D}
  {\bfseries 54} (1996) 5223--5238},
  \href{http://arxiv.org/abs/hep-th/9603174}{{\ttfamily arXiv:hep-th/9603174}}.

\bibitem{DH}
C.~Duval and P.~Horvathy, ``{Particles With Internal Structure: The Geometry of
  Classical Motions and Conservation Laws},''
  \href{http://dx.doi.org/10.1016/0003-4916(82)90226-3}{{\em Ann. Phys.}
  {\bfseries 142} (1982) 10--33}.

\bibitem{WEINSTEIN1997379}
A.~Weinstein, ``{The modular automorphism group of a Poisson manifold},''
  \href{http://dx.doi.org/https://doi.org/10.1016/S0393-0440(97)80011-3}{{\em
  J. Geom. Phys.} {\bfseries 23} no.~3, (1997) 379--394}.

\bibitem{Langmann:2001yr}
E.~Langmann and R.~J. Szabo, ``{Teleparallel gravity and dimensional reductions
  of noncommutative gauge theory},''
  \href{http://dx.doi.org/10.1103/PhysRevD.64.104019}{{\em Phys. Rev. D}
  {\bfseries 64} (2001) 104019},
  \href{http://arxiv.org/abs/hep-th/0105094}{{\ttfamily arXiv:hep-th/0105094}}.

\bibitem{rivelles2003noncommutative}
V.~O. Rivelles, ``{Noncommutative field theories and gravity},''
  \href{http://dx.doi.org/10.1016/S0370-2693(03)00271-5}{{\em Phys. Lett. B}
  {\bfseries 558} (2003) 191--196},
  \href{http://arxiv.org/abs/hep-th/0212262}{{\ttfamily arXiv:hep-th/0212262}}.

\bibitem{Yang:2004vd}
H.~S. Yang, ``{Exact Seiberg--Witten map and induced gravity from
  noncommutativity},'' \href{http://dx.doi.org/10.1142/S0217732306021682}{{\em
  Mod. Phys. Lett. A} {\bfseries 21} (2006) 2637--2647},
  \href{http://arxiv.org/abs/hep-th/0402002}{{\ttfamily arXiv:hep-th/0402002}}.

\bibitem{Szabo:2006wx}
R.~J. Szabo, ``{Symmetry, gravity and noncommutativity},''
  \href{http://dx.doi.org/10.1088/0264-9381/23/22/R01}{{\em Class. Quant.
  Grav.} {\bfseries 23} (2006) R199--R242},
  \href{http://arxiv.org/abs/hep-th/0606233}{{\ttfamily arXiv:hep-th/0606233}}.

\bibitem{steinacker2010}
H.~Steinacker, ``{Emergent Geometry and Gravity from Matrix Models: an
  Introduction},'' \href{http://dx.doi.org/10.1088/0264-9381/27/13/133001}{{\em
  Class. Quant. Grav.} {\bfseries 27} (2010) 133001},
  \href{http://arxiv.org/abs/1003.4134}{{\ttfamily arXiv:1003.4134 [hep-th]}}.

\bibitem{Saemann:2012ab}
C.~S{\"a}mann and R.~J. Szabo, ``{Groupoids, Loop Spaces and Quantization of
  $2$-Plectic Manifolds},''
  \href{http://dx.doi.org/10.1142/S0129055X13300057}{{\em Rev. Math. Phys.}
  {\bfseries 25} (2013) 1330005},
  \href{http://arxiv.org/abs/1211.0395}{{\ttfamily arXiv:1211.0395 [hep-th]}}.

\bibitem{li2020symplectic}
S.~Li and D.~Rupel, ``Symplectic groupoids for cluster manifolds,''
  \href{http://dx.doi.org/10.1016/j.geomphys.2020.103688}{{\em J. Geom. Phys.}
  {\bfseries 154} (2020) 103688}.

\bibitem{Hawkins2008}
E.~Hawkins, ``{A groupoid approach to quantization},''
  \href{http://dx.doi.org/10.1088/1126-6708/2001/08/032}{{\em J. Sympl. Geom.}
  {\bfseries 6} no.~1, (2008) 61--125},
  \href{http://arxiv.org/abs/math/0612363}{{\ttfamily arXiv:math/0612363}}.

\bibitem{Cattaneo:2003fs}
A.~S. Cattaneo and P.~Xu, ``{Integration of twisted Poisson structures},''
  \href{http://dx.doi.org/10.1016/S0393-0440(03)00086-X}{{\em J. Geom. Phys.}
  {\bfseries 49} (2004) 187--196},
  \href{http://arxiv.org/abs/math/0302268}{{\ttfamily arXiv:math/0302268}}.

\bibitem{Szabo:2019hhg}
R.~J. Szabo, ``{An Introduction to Nonassociative Physics},''
  \href{http://dx.doi.org/10.22323/1.347.0100}{{\em Proc. Sci.} {\bfseries 347}
  (2019) 100}, \href{http://arxiv.org/abs/1903.05673}{{\ttfamily
  arXiv:1903.05673 [hep-th]}}.

\bibitem{colladay1998lorentz}
D.~Colladay and V.~A. Kosteleck{\'y}, ``{Lorentz violating extension of the
  standard model},'' \href{http://dx.doi.org/10.1103/PhysRevD.58.116002}{{\em
  Phys. Rev. D} {\bfseries 58} (1998) 116002},
  \href{http://arxiv.org/abs/hep-ph/9809521}{{\ttfamily arXiv:hep-ph/9809521}}.

\bibitem{coleman1999high}
S.~R. Coleman and S.~L. Glashow, ``{High-energy tests of Lorentz invariance},''
  \href{http://dx.doi.org/10.1103/PhysRevD.59.116008}{{\em Phys. Rev. D}
  {\bfseries 59} (1999) 116008},
  \href{http://arxiv.org/abs/hep-ph/9812418}{{\ttfamily arXiv:hep-ph/9812418}}.

\bibitem{kostelecky2004gravity}
V.~A. Kosteleck{\'y}, ``{Gravity, Lorentz violation, and the standard model},''
  \href{http://dx.doi.org/10.1103/PhysRevD.69.105009}{{\em Phys. Rev. D}
  {\bfseries 69} (2004) 105009},
  \href{http://arxiv.org/abs/hep-th/0312310}{{\ttfamily arXiv:hep-th/0312310}}.

\bibitem{liberati2013tests}
S.~Liberati, ``{Tests of Lorentz invariance: a 2013 update},''
  \href{http://dx.doi.org/10.1088/0264-9381/30/13/133001}{{\em Class. Quant.
  Grav.} {\bfseries 30} (2013) 133001},
  \href{http://arxiv.org/abs/1304.5795}{{\ttfamily arXiv:1304.5795 [gr-qc]}}.

\bibitem{abreu2022testing}
{\bfseries Pierre Auger} Collaboration, P.~Abreu {\em et~al.}, ``{Testing
  effects of Lorentz invariance violation in the propagation of astroparticles
  with the Pierre Auger Observatory},''
  \href{http://dx.doi.org/10.1088/1475-7516/2022/01/023}{{\em JCAP} {\bfseries
  01} no.~01, (2022) 023}, \href{http://arxiv.org/abs/2112.06773}{{\ttfamily
  arXiv:2112.06773 [astro-ph.HE]}}.

\bibitem{Sarker2023InvestigatingTE}
A.~Sarker, A.~Medhi, and M.~M. Devi, ``{Investigating the effects of Lorentz
  Invariance Violation on the CP-sensitivities of the Deep Underground Neutrino
  Experiment},'' \href{http://dx.doi.org/10.1140/epjc/s10052-023-11785-4}{{\em
  Eur. Phys. J. C} {\bfseries 83} no.~7, (2023) 592},
  \href{http://arxiv.org/abs/2302.10456}{{\ttfamily arXiv:2302.10456
  [hep-ph]}}.

\bibitem{Finke_2023}
J.~D. Finke and S.~Razzaque, ``{Possible Evidence for Lorentz Invariance
  Violation in Gamma-Ray Burst 221009A},''
  \href{http://dx.doi.org/10.3847/2041-8213/acade1}{{\em Astrophys. J. Lett.}
  {\bfseries 942} no.~1, (2023) L21},
  \href{http://arxiv.org/abs/2210.11261}{{\ttfamily arXiv:2210.11261
  [astro-ph.HE]}}.

\bibitem{kostelecky2009electrodynamics}
V.~A. Kosteleck{\'y} and M.~Mewes, ``{Electrodynamics with Lorentz-violating
  operators of arbitrary dimension},''
  \href{http://dx.doi.org/10.1103/PhysRevD.80.015020}{{\em Phys. Rev. D}
  {\bfseries 80} (2009) 015020},
  \href{http://arxiv.org/abs/0905.0031}{{\ttfamily arXiv:0905.0031 [hep-ph]}}.

\end{thebibliography}
\end{document}